\newtheorem{theorem}{Theorem}
\newtheorem{corollary}[theorem]{Corollary}
\newtheorem{definition}[theorem]{Definition}
\newtheorem{lemma}[theorem]{Lemma}
\newtheorem{proposition}[theorem]{Proposition}
\newtheorem{remark}[theorem]{Remark}
\newenvironment{proof}[1][Proof]{\textbf{#1.} }{\ \rule{0.5em}{0.5em}}
\begin{document}

\title{Asymptotic Equivalence of Quantum Stochastic Models}
\author{Luc Bouten$^{1}$, John E. Gough $^{1,2}$}
\date{}
\maketitle

\hyphenation{mo-dels}

\begin{abstract}
We introduce the notion of perturbations of quantum stochastic models using the series product, and establish the asymptotic convergence of sequences of quantum stochastic models under the assumption that they are related via a right series product perturbation. While the perturbing models converge to the trivial model, we allow that the individual sequences may be divergent corresponding to large model parameter regimes that frequently occur in physical applications. This allows us to introduce the concept of asymptotically equivalent models, and we provide several examples where we replace one sequence of models with an equivalent one tailored to capture specific features. \\
\indent
These examples include: a series product formulation of the principle of virtual work; essential commutativity of the noise in strong squeezing models; the decoupling of polarization channels in scattering by Faraday rotation driven by a strong laser field; and an application to quantum local asymptotic normality.
\end{abstract}

1) Visiting scholar, Institut Henri Poincar\'{e}, 11 rue Pierre et Marie Curie, 
75231 Paris Cedex 05, France.

2) Institute for Mathematics, Physics and Computer Science, Aberystwyth
University, Aberystwyth, Ceredigion, SY23 3BZ, Wales

\section{Introduction}

The Hudson-Parthasarathy theory of quantum stochastic evolutions with Fock-space noise \cite{HP,Par92}, has been applied to various open quantum models. Quantum stochastic differential equations (QSDEs) arise via a weak coupling procedure \cite{AFL90,G05} and are widely used in the field of quantum optics. Together with the quantum input-output theory of Gardiner-Collett \cite{Gardiner_Collett,GarZol00}, the quantum filtering theory of V.P. Belavkin \cite{Belavkin92,BvHJ07}, and the results on adiabatic elimination \cite{GvH07,BS08}, this lays the bedrock for a powerful new theory of quantum feedback networks \cite{GouJam09a,GouJam09b} that allows for a description of interconnected quantum components communicating via quantum fields.

In many physical applications, specific parameters may be very large or very small. This motivates a study of the limit behavior of quantum stochastic models in extreme parameter limit regimes. Often we would like to replace the original model with one which is asymptotically equivalent, but where the dominant terms are more clearly defined, or where desired characteristics are more transparent. In this paper, we show that perturbing a (possibly divergent) sequence of models using a \textit{right} series product \cite{GouJam09a,GouJam09b} construction leads to a natural class of asymptotically equivalent models. 

We consider perturbations to the coefficients of unitary quantum stochastic differential equations, arising through the right series product, which converge to the neutral element in the uniform topology. It turns out (Theorem \ref{thm:main_divergent}) that the perturbed and unperturbed models are asymptotically equivalent in the sense that the difference between the unitary processes they generate is asymptotically convergent in the strong topology. We make use of the Trotter-Kato Theorem in the proof, essentially using the graph convergence of the generators to show the uniform convergence of certain semigroups associated to the unitaries. The proof is completed by a density argument on the Fock space.

As applications of our main result, we consider a diverse array of examples. An example arising from a concrete physical problem is the strong squeezing limit in quantum optics: here a desirable feature of the model is that one quadrature of the field dominates and one would like to replace the original model with an equivalent one where only this quadrature appears explicitly. Theorem \ref{thm:main_divergent} makes this replacement precise, and also leads to a new Hamiltonian term indicating the squeezing effect. A second concrete example comes from the Faraday rotation of $y$-polarized light into the $x$-polarization direction by a cold atom gas. Here we show that in the weak rotation and strong driving regime the $x$ and $y$ polarization directions decouple, again by replacing the original sequence of models with an appropriately decoupled asymptotically equivalent sequence.

We also apply Theorem \ref{thm:main_divergent} to more general principles. We elucidate the principle of virtual work for open quantum systems \cite{AFG12} by reformulating the model variation as a \textit{left} series product perturbation, and show that vanishing perturbations lead to convergence of the corresponding evolutions. Finally, we show how quantum local asymptotic normality \cite{CBG} may be treated within our framework. Here we consider a family of open quantum system models parameterized by an unknown parameter $\theta$. We assume a central limit type scaling $\theta = \theta_0 + \frac{v}{\sqrt{n}}$ on the parameter $\theta$, with standard assumptions on the Taylor series expansion of the QSDE coefficients about $\theta_0$. In addition, we take the long time limit $t \hookrightarrow nt$. In the large $n$ limit, we obtain an equivalent asymptotic model based on the truncated Taylor expansion. The series product structure of the perturbation nicely demonstrates the appearance of extra Hamiltonian terms. 

The remainder of this article is organized as follows. In Section \ref{sec:SLH}, we recall the Hudson-Parthasarathy theory of QSDEs \cite{HP} and the related SLH modeling. Here the notion of perturbations via the series product is introduced, along with the notion of asymptotically equivalent models. Moreover, this section formulates the main result, Theorem \ref{thm:main_divergent}. In Section \ref{sec:proof} we prove the Theorem. The examples, representing applications of our main result, are then presented in Section \ref{sec:applications}.

\section{The series product of quantum Markov systems}
\label{sec:SLH}
\subsection{Quantum stochastic process description}

We begin with a description of a quantum Markovian system, colloquially called an \textit{SLH model}, see
Figure 1, which may be considered as a single component in a quantum network. It is prescribed by fixing the Hilbert space of the system $\mathfrak{h} $ the multiplicity $n$ of quantum inputs so that the noise space is is
the Fock space 
\begin{eqnarray*}
\mathfrak{F}=\Gamma \left( L_{\mathfrak{K}}^{2}\left( \mathbb{R},dt\right)
\right) ,
\end{eqnarray*}
where $\Gamma \left( \cdot \right) $ is the bosonic Fock space functor and $\mathfrak{K}=\mathbb{C}^{n}$ is the multiplicity space, and finally the
collection of operators $G \sim (S, L, H) $ with 
\begin{eqnarray*}
S\triangleq \left[ 
\begin{array}{ccc}
S_{11} & \dots & S_{1n} \\ 
\vdots &  & \vdots \\ 
S_{n1} & \cdots & S_{nn}
\end{array}
\right] , \quad L\triangleq \left[ 
\begin{array}{c}
L_{1} \\ 
\vdots \\ 
L_{n}
\end{array}
\right] ,
\end{eqnarray*}
and $H$ self-adjoint.

\begin{center}
\setlength{\unitlength}{.04cm} 
\begin{picture}(120,45)
\label{pic1}
\thicklines
\put(45,10){\line(0,1){20}}
\put(45,10){\line(1,0){30}}
\put(75,10){\line(0,1){20}}
\put(45,30){\line(1,0){30}}
\thinlines
\put(48,20){\vector(-1,0){45}}
\put(120,20){\vector(-1,0){20}}
\put(120,20){\line(-1,0){48}}
\put(50,20){\circle{4}}
\put(70,20){\circle{4}}
\put(100,26){inputs}
\put(48,35){system}
\put(56,17){$G$}
\put(10,26){outputs}
\end{picture}

Figure 1: input-output component
\end{center}

With the specification $G\sim (S,L,H)$, we have a
quantum stochastic evolution, $\{ U_{G}(t):t\geq 0\}$ which is the adapted
unitary quantum stochastic process on $\mathfrak{h}\otimes \mathfrak{F}$ in
the sense of Hudson and Parthasarathy \cite{HP} occurring as the solution to
the quantum stochastic differential equation (QSDE) 
\begin{eqnarray}
dU_{G}=\left( dG\right) U_{G},\quad U_{G}\left( 0\right) =1, 
\end{eqnarray}
where 
\begin{eqnarray}
dG\left( t\right) &=&\left( S_{ij}-\delta _{ij}\right) \otimes
d\Lambda _{ij}\left( t\right) +L_{i}\otimes dA_{i}^{\ast  }\left(
t\right)  \nonumber \\
&&-L_{i}^{\ast  }S_{ij}\otimes dA_{j}\left( t\right) +%
K\otimes dt,
\end{eqnarray}
where 
\begin{eqnarray}
K=-\frac{1}{2}L_{i}^{\ast  }L_{i}-iH.
\label{eq:K}
\end{eqnarray}
Note that 
\begin{eqnarray}
K+K^\ast  =L_{i}^\ast  L_{i}.
\label{eq:K+Kdag}
\end{eqnarray}
Here we encounter Ito differentials with respect to the fundamental process
of scattering, creation, annihilation and time on the Fock space. To recall,
we fix the standard orthonormal basis $\left\{ e_{j}:j=1,\cdots ,n\right\} $
for $\mathfrak{K}=\mathbb{C}^{n}$ and take $A_{i}\left( t\right) \triangleq
A\left( e_{i}\otimes 1_{\left[ 0,t\right] }\right) $ and $A_{i}^{\ast 
}\left( t\right) \triangleq A^{\ast  }\left( e_{i}\otimes 1_{\left[ 0,t\right]
}\right) $ to be the operators describing the \textit{annihilation} and 
\textit{creation} of a quantum in the $i$th channel over the time interval $%
\left[ 0,t\right] $, respectively; the operator describing the scattering
from the $j$th channel\ to the $i$th channel over the time interval $\left[
0,t\right] $ is $\Lambda _{ij}\left( t\right) $ which is the differential
second quantization of the tensor product on $\mathfrak{K}\otimes
L^{2}[0,\infty )$ of $|e_{i}\rangle \langle e_{j}|$ tensored with the
operator of multiplication by $1_{\left[ 0,t\right] }$.

Operators on the system evolve according to $j_{t}\left( X \right)
\triangleq U_{G}^{\ast  }\left( t\right) \left[ X\otimes 1\right]
U_{G}\left( t\right) $, and we have 
\begin{eqnarray}
dj_{t}\left( X\right) &=&j_{t}\left( \mathscr{L}\left( X%
\right) \right) dt+j_{t}(S_{ji}^{\ast  }\left[ X,L%
_{j}\right] )dA_{i}^{\ast  }+j_{t}\left( [L_{i}^{\ast  },X]%
S_{ij}\right) dA_{j}  \nonumber \\
&&+j_{t}\left( \sum_{k}S_{ki}^\ast  XS%
_{kj}-\delta _{ij}X\right) d\Lambda _{ij}(t),  \label{eq:HL}
\end{eqnarray}
with the Lindblad generator
\begin{eqnarray}
\mathscr{L}\left( X\right) \triangleq \frac{1}{2}L_{i}^{\ast  }%
\left[ X,L_{i}\right] +\frac{1}{2}\left[ L%
_{i}^{\ast  },X\right] L_{i}-i\left[ X,H%
\right] \equiv L_{i}^{\ast  }XL_{i}-K^{\ast  }%
X-XK.
\label{eq:Lindbladian}
\end{eqnarray}

The output fields are defined by $A_i^{\text{out}} (t) \triangleq U_G^{\ast 
}\left( t\right) \left[ 1\otimes A_{i}\left( t\right) \right] U_G\left(
t\right)$, and we have 
\begin{eqnarray}
dA^{\text{out}}_{i}\left( t\right) &=&j_{t}\left( S_{ij}\right)
dA_{j}\left( t\right) +j_{t}\left( L_{i}\right) dt.
\end{eqnarray}

We can put two such open systems in cascade, see Figure 2, where the output
of $G_1$ is fed in as the input of $G_2$ under the limit where the time of
propagation between the components vanishes. Here we obtain the reduced
quantum Markov component, $G_{\text{series}}$, with 
\begin{eqnarray}
S_{\text{series}} &=&S_{2}S_{1}, \\
L_{\text{series}} &=&L_{2}+S_{2}L_{1}, \\
H_{\text{series}} &=&H_{1}+H_{2}+\text{Im}\left\{ 
L_{2}^{\ast  }S_{2}L_{1}\right\} .
\end{eqnarray}

\begin{center}
\setlength{\unitlength}{.1cm} 
\begin{picture}(100,22)
\label{pic2}

\thicklines

\put(30,5){\line(0,1){10}}
\put(30,5){\line(1,0){20}}
\put(50,5){\line(0,1){10}}
\put(30,15){\line(1,0){20}}

\put(60,5){\line(0,1){10}}
\put(60,5){\line(1,0){20}}
\put(80,5){\line(0,1){10}}
\put(60,15){\line(1,0){20}}

\thinlines
\put(32,10){\vector(-1,0){15}}
\put(62,10){\vector(-1,0){14}}

\put(92,10){\vector(-1,0){14}}

\put(33,10){\circle{2}}
\put(63,10){\circle{2}}

\put(47,10){\circle{2}}
\put(77,10){\circle{2}}

\put(38,9){$G_2$}
\put(68,9){$G_1$}

\end{picture}

Figure 2: Systems in series/cascade
\end{center}

In the case of two cascaded systems with Hilbert spaces $\mathfrak{h}_i$, ($%
i=1,2)$, then we have the joint Hilbert space $\mathfrak{h} \cong %
\mathfrak{h}_1 \otimes \mathfrak{h}_2$ and we should interpret $G_1 \equiv (%
S_1 \otimes I_2, L_1 \otimes I_2, H_1 \otimes I_2
) $, etc. However, the expression $G_{\text{series }}$ makes sense even if
we do not have this factorization: that is, we cannot separate out the
internal degrees of freedom of the two components and should look at this as
a double-pass through the same single component first via coupling $G_1$ and
then via $G_2$, see Figure 3.

\begin{center}
\setlength{\unitlength}{.04cm} 
\begin{picture}(120,45)
\label{pic3}
\thicklines
\put(45,10){\line(0,1){40}}
\put(45,10){\line(1,0){30}}
\put(75,10){\line(0,1){40}}
\put(45,50){\line(1,0){30}}
\thinlines
\put(48,20){\vector(-1,0){40}}
\put(120,20){\vector(-1,0){20}}
\put(120,20){\line(-1,0){48}}
\put(50,20){\circle{4}}
\put(70,20){\circle{4}}
\put(48,40){\line(-1,0){48}}
\put(48,40){\vector(-1,0){45}}
\put(120,40){\vector(-1,0){20}}
\put(120,40){\line(-1,0){48}}
\put(50,40){\circle{4}}
\put(70,40){\circle{4}}
\put(0,5){\line(1,0){120}}
\put(0,5){\line(0,1){35}}
\put(120,5){\line(0,1){15}}
\put(100,46){inputs}
\put(56,17){$G_2$}
\put(56,37){$G_1$}
\put(10,26){outputs}
\end{picture}

Figure 3: The series product $G_2 \vartriangleleft G_1 $ as double-pass
through the same component.
\end{center}

In this more general setting, we say that we have an associative,
non-commutative product defined, on the set of all models with the same
system Hilbert space, $\mathfrak{h}$, and multiplicity space, $\mathfrak{K}$%
, by 
\begin{eqnarray}
G_{2}\vartriangleleft G_{1}\equiv G_{\text{series}}. 
\end{eqnarray}
We refer to this as the series product, and it is a group law with identity $%
E\sim (I_{\mathfrak{K}},0,0)$. The inverse of the series product is well defined: $G \sim (S,L,H)$ has $G^{-1} \sim
(S^\ast  , -S^\ast   L , -H)$ with $G^{-1} \vartriangleleft G = E = G \vartriangleleft G^{-1}$.

\begin{remark}
An alternative convention is to represent the model as $G\sim \left[ S,L,K\right] $ where the non-selfadjoint damping operator 
$K$ is used instead of $H$. The series product for this
representation takes the form 
\begin{eqnarray}
\left[ S_{2},L_{2},K_{2}\right] \vartriangleleft %
\left[ S_{1},L_{1},K_{1}\right] =\left[ S%
_{2}S_{1},L_{2}+S_{2}L_{1},K%
_{1}+K_{2}-L_{2}^\ast  S_{2}L_{1}\right]
. 
\end{eqnarray}
\end{remark}

\begin{definition}
Given a model $G\sim (S, L, H) $ and $\delta G
\sim (\delta S, \delta L, \delta H) $ another
model with the same system Hilbert space, $\mathfrak{h}$, and multiplicity
space, $\mathfrak{K}$. Then we say that a left/right series product
perturbation of $G$ by $\delta G$ is given by $\delta G \vartriangleleft G$, 
$G \vartriangleleft \delta G$, respectively.
\end{definition}

For a perturbation $\delta G \sim ( e^{i \delta \Theta}, \delta 
L, \delta H) $ we obtain the right series product
perturbation of $G \sim (S, L, H) $ given by 
\begin{eqnarray}
\widetilde G &=&  G \vartriangleleft \delta G  \nonumber \\
&=& \left( S e^{i \delta  \Theta } , \, L + S \delta L, 
\, H+ \delta H + \text{Im}
\left\{  L^\ast   S \delta L \right\} \right) 
\nonumber \\
&\equiv & (\widetilde{S}, \widetilde{L }, \widetilde{%
H }) .  \label{eq:pert_SLH}
\end{eqnarray}

We now state our main result.

\begin{theorem}
\label{thm:main_divergent} 
Let $(G^{(k)})_{k}$ be a sequence of models with the same system Hilbert
space, $\mathfrak{h}$, and multiplicity space, $\mathfrak{K}$. Let $(\delta
G^{(k)})_{k}$ be a sequence of perturbations of the form $\delta G^{(k)}\sim
\left( e^{i\delta \Theta ^{\left( k\right) }},\delta L^{\left( k\right)
},\delta H^{\left( k\right) }\right) $ with the property that $%
\lim_{k\rightarrow \infty }\delta G^{\left( k\right) }=E\sim \left(
I,0,0\right) $ in the uniform topology (i.e., in the operator norm). 
Then, for all $\Psi \in \mathfrak{h}\otimes \mathfrak{F}$,
we have 
\begin{eqnarray}
\lim_{k\rightarrow \infty }\Vert \big(U_{G^{(k)}}(t)-U_{G^{(k)}%
\vartriangleleft \delta G^{(k)}}(t)\big)\Psi \Vert =0. 
\label{eq:thm_limit}
\end{eqnarray}
\end{theorem}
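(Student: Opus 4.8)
The plan is to compare the two evolutions through their \emph{relative} process rather than coefficient by coefficient, since the hypotheses only control the perturbation $\delta G^{(k)}$ and say nothing about the (possibly divergent) models $G^{(k)}$ themselves. First I would set $V^{(k)}(t) \triangleq U_{G^{(k)}}(t)^{\ast} U_{\widetilde G^{(k)}}(t)$, where $\widetilde G^{(k)} = G^{(k)} \vartriangleleft \delta G^{(k)}$. Since $U_{G^{(k)}}(t)$ is unitary,
\[
\big\| \big(U_{G^{(k)}}(t) - U_{\widetilde G^{(k)}}(t)\big)\Psi \big\| = \big\| \big(V^{(k)}(t) - 1\big)\Psi \big\| ,
\]
so \eqref{eq:thm_limit} is equivalent to the strong convergence $V^{(k)}(t) \to 1$. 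The gain is that $V^{(k)}$ is again a unitary adapted process, and one expects its generator to see only the small data $\big(e^{i\delta\Theta^{(k)}}, \delta L^{(k)}, \delta H^{(k)}\big)$.

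Next I would derive the QSDE for $V^{(k)}$ from the quantum It\^o rule $dV = (dU_G^{\ast})U_{\widetilde G} + U_G^{\ast}(dU_{\widetilde G}) + (dU_G^{\ast})(dU_{\widetilde G})$, using the coefficients $\widetilde S = S e^{i\delta\Theta}$, $\widetilde L = L + S\delta L$, $\widetilde K = K + \delta K - L^{\ast} S\delta L$ from \eqref{eq:pert_SLH} and the Remark. Writing $j_t^{G}(X) = U_G(t)^{\ast} X U_G(t)$, the computation should collapse to
\[
dV^{(k)} = \Big( j_t^{G^{(k)}}\!\big(e^{i\delta\Theta^{(k)}}-I\big)_{ab}\, d\Lambda_{ab} + j_t^{G^{(k)}}\!\big(\delta L^{(k)}_a\big)\, dA_a^{\ast} - j_t^{G^{(k)}}\!\big(\delta L^{(k)\ast} e^{i\delta\Theta^{(k)}}\big)_b\, dA_b + j_t^{G^{(k)}}\!\big(\delta K^{(k)}\big)\, dt \Big) V^{(k)} .
\]
The step I expect to be the main obstacle is the cancellation in the $dt$--coefficient: the series product plants the term $\mathrm{Im}\{L^{\ast} S\delta L\}$ inside $\widetilde H$, which is \emph{not} small when $L^{(k)}$ diverges, and $\widetilde K$ likewise carries $-L^{\ast}S\delta L$; these combine with the It\^o correction $L_i^{\ast}\widetilde L_i\, dt$ from $(dU_G^{\ast})(dU_{\widetilde G})$ and with $K^{\ast}+\widetilde K$, and through the gauge identity \eqref{eq:K+Kdag} relating $K+K^{\ast}$ to $L^{\ast}L$ \emph{every} contribution containing the divergent $L^{(k)}$ cancels, leaving exactly $j_t^{G^{(k)}}(\delta K^{(k)})$ with $\delta K^{(k)} = -\tfrac12 \delta L^{(k)\ast}\delta L^{(k)} - i\,\delta H^{(k)}$. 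Checking that the scattering, creation and annihilation coefficients likewise reduce to $e^{i\delta\Theta^{(k)}}-I$ and $\delta L^{(k)}$ (using $S^{\ast}S = I$) is routine but must be carried out with care.

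Granting this, the coefficients of $V^{(k)}$ are uniformly small: $j_t^{G^{(k)}}$ is implemented by a unitary, hence isometric for the operator norm, and $S$, $e^{i\delta\Theta^{(k)}}$ are unitary, so each coefficient has norm equal to that of the corresponding perturbation datum, with $\|e^{i\delta\Theta^{(k)}}-I\|$, $\|\delta L^{(k)}\|$ and $\|\delta K^{(k)}\|$ all tending to $0$, \emph{uniformly in} $t$, by the assumed uniform-topology convergence $\delta G^{(k)} \to E$. This is precisely the graph convergence of the generator of $V^{(k)}$ to the zero generator on the exponential domain $\mathfrak h \otimes \mathcal E$, which furnishes a common core. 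I would then invoke the Trotter--Kato theorem for the semigroups associated with these generators to obtain $V^{(k)}(t) \to 1$ strongly, first on exponential vectors $\phi\otimes e(f)$ and then, using $\|V^{(k)}(t)\| = 1$ and the density of such vectors in $\mathfrak h\otimes\mathfrak F$, on all of $\mathfrak h\otimes\mathfrak F$.

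A technical wrinkle in that last step is that $V^{(k)}$ is a time--inhomogeneous cocycle rather than a genuine one--parameter semigroup, since its coefficients depend on $t$ through $j_t^{G^{(k)}}$. On exponential vectors, however, the first fundamental formula turns $t\mapsto \langle \Psi, V^{(k)}(t)\Psi\rangle$ into an integral equation whose kernel is bounded in norm by the vanishing perturbation data (times a constant depending on $\|f\|_\infty$ and $t$), so a Gr\"onwall estimate yields $\langle\Psi, V^{(k)}(t)\Psi\rangle \to \|\Psi\|^2$ and hence $\|(V^{(k)}(t)-1)\Psi\|^2 = 2\|\Psi\|^2 - 2\,\mathrm{Re}\,\langle\Psi, V^{(k)}(t)\Psi\rangle \to 0$; this is the concrete form the Trotter--Kato and density argument takes here.
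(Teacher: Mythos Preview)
Your argument is correct, and the cancellation you isolate---that the $dt$ part of the relative evolution is $j_t^{G^{(k)}}(\delta K^{(k)})$, with all $L^{(k)}$-dependence disappearing through $K+K^\ast+L^\ast L=0$ and $S^\ast S=I$---is exactly the heart of the matter in the paper as well (this is the content of its Lemma~\ref{lem:final}). The remaining coefficients $j_t^{G^{(k)}}(e^{i\delta\Theta^{(k)}}-I)$ and $j_t^{G^{(k)}}(\delta L^{(k)})$ are also correct, and your norm bound via the isometry of $j_t$ is the right observation.

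Where you and the paper diverge is in how the smallness of these coefficients is converted into strong convergence. The paper does \emph{not} work with $V^{(k)}$ directly on $\mathfrak h\otimes\mathfrak F$; instead it passes to the \emph{associated} semigroups $T_t^{(k,\alpha)}(X)=\mathrm{tr}_{\mathfrak F}\{I\otimes|\Omega\rangle\langle\Omega|\,U_{\widetilde G^{(k)}(\alpha)}(t)^\ast(X\otimes I)U_{G^{(k)}(\alpha)}(t)\}$ on $\mathcal B(\mathfrak h)$, where $G(\alpha)=G\vartriangleleft(I,\alpha,-\tfrac12|\alpha|^2)$ absorbs a displacement. These are \emph{genuine} norm-continuous one-parameter contraction semigroups (cocycle property plus bounded coefficients), so Trotter--Kato applies cleanly with limit the identity on $\mathcal B_0=\mathbb C I$; the case of a general simple test function $f$ is then handled by composing the semigroups $T^{(k,\alpha_j)}_{t_j-t_{j-1}}$ over the steps of $f$. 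Your route stays on the full space and, as you yourself note in the final paragraph, cannot literally invoke Trotter--Kato because the coefficients of $V^{(k)}$ depend on $t$ through $j_t^{G^{(k)}}$; your Gr\"onwall/first-fundamental-formula estimate on $\langle\Psi,V^{(k)}(t)\Psi\rangle$ is the correct substitute and in fact makes the argument more elementary, since it needs only the uniform-in-$t$ bound $\|j_t^{G^{(k)}}(X)\|=\|X\|$ and the unitarity of $V^{(k)}$. The paper's approach buys a clean appeal to a standard theorem and a modular structure (displacement $+$ semigroup $+$ composition over simple $f$); yours buys directness and avoids the detour through $\mathcal B(\mathfrak h)$. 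One stylistic point: drop the preliminary appeal to Trotter--Kato in your write-up and present the integral estimate of your last paragraph as the proof proper---it stands on its own.
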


Note that Theorem \ref{thm:main_divergent} leaves open the possibility that the
sequence of unperturbed systems, $(G^{(k)})_{k}$, need not have a well
defined limit. This is intentional, and will be exploited in applications.

In the Theorem we took the series product perturbation on the right. This
turns out to be more natural. We could also consider left series product
perturbations:\ $\widetilde{G}^{\left( k\right) }=\delta G^{\left( k\right)
}\vartriangleleft G^{\left( k\right) }$, however, we now need the following
expression
\begin{eqnarray}
&&\delta K^{\left( k\right) }+L^{\left( k\right) \ast }\left( e^{i\delta
\Theta ^{\left( k\right) }}-1\right) L^{\left( k\right) }+\left( \delta
L^{\left( k\right) }\right) ^{\ast }e^{i\delta \Theta ^{\left( k\right)
}}L^{\left( k\right) }-L^{\left( k\right) \ast }\delta L^{\left( k\right) }
\nonumber \\
&&+\alpha ^{\ast }S^{\left( k\right) \ast }\left[ \left( e^{i\delta \Theta
^{\left( k\right) }}-1\right) L^{\left( k\right) }+\delta L^{\left( k\right)
}\right] \nonumber \\
&& -\left[ L^{\left( k\right) \ast }\left( e^{i\delta \Theta ^{\left(
k\right) }}-1\right) +\delta L^{\left( k\right) }e^{i\delta \Theta ^{\left(
k\right) }}\right] S^{\left( k\right) }\alpha  \nonumber\\
&&+\alpha ^{\ast }S^{\left( k\right) \ast }\left( e^{i\delta \Theta
^{\left( k\right) }}-1\right) S^{\left( k\right) }\alpha ,
\label{eq:left_condition}
\end{eqnarray}
to vanish as $k\rightarrow \infty $ in the uniform topology (i.e., the operator norm) for each $\alpha
\in \mathbb{C}$. Note that there are terms such as $L^{(k)\ast } \, \delta L^{(k)}$ appearing in (\ref{eq:left_condition}) which
needs to be examined in detail since they may have competing scalings: the divergence of $L^{(k)}$ may compensate for the smallness of $\delta L^{(k)}$ for large $k$. This will then lead to the vanishing of (\ref{eq:left_condition}) becoming a nontrivial condition to be checked. For this reason, we prefer to work with the right perturbations.
In the special case where we consider perturbations about a fixed model $G^{(k)} =G$, then the condition $\delta G^{(k)} \to E \sim (I,0,0)$ uniformly suffices to ensure that (\ref{eq:left_condition}) vanishes. We summarize this in the following corollary.

\begin{corollary}
\label{cor:main}
Let $G$ be a fixed model with the system Hilbert
space, $\mathfrak{h}$, and multiplicity space, $\mathfrak{K}$. Let $(\delta
G^{(k)})_{k}$ be a sequence of perturbations $\delta G^{(k)}$ with the property that $%
\lim_{k\rightarrow \infty }\delta G^{\left( k\right) }=E\sim \left(
I,0,0\right) $ uniformly. 
Then, for all $\Psi \in \mathfrak{h}\otimes \mathfrak{F}$,
we have 
\begin{eqnarray}
\lim_{k\rightarrow \infty }\Vert \big(U_{G}(t)-U_{\widetilde{G}^{(k)}}(t)\big)\Psi \Vert =0, 
\label{eq:cor_limit}
\end{eqnarray}
where $\widetilde{G}^{(k)}$ is the left/right series product perturbation of $G$ by $\delta G^{(k)}$.
\end{corollary}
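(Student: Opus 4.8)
The plan is to treat the two perturbation sides separately, reducing each to Theorem \ref{thm:main_divergent} and to the proof method of Section \ref{sec:proof}. For the right series product perturbation $\widetilde G^{(k)} = G \vartriangleleft \delta G^{(k)}$ there is nothing new to prove: this is exactly the conclusion of Theorem \ref{thm:main_divergent} applied to the constant sequence $G^{(k)} \equiv G$. That constant sequence has the prescribed system and multiplicity spaces, and by hypothesis $\delta G^{(k)} \to E$ in operator norm, so (\ref{eq:thm_limit}) immediately yields (\ref{eq:cor_limit}). Hence the entire content of the corollary lies in the left perturbation $\widetilde G^{(k)} = \delta G^{(k)} \vartriangleleft G$.

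For the left case I would re-run the Trotter-Kato / graph-convergence argument established in Section \ref{sec:proof} for Theorem \ref{thm:main_divergent}, now comparing the generators of $U_{\delta G^{(k)} \vartriangleleft G}$ and $U_{G}$ rather than those of a right-perturbed pair. As recorded in the discussion preceding the corollary, the relevant difference of semigroup generators, tested against constant-amplitude coherent vectors of amplitude $\alpha \in \mathbb{C}$, is precisely the expression (\ref{eq:left_condition}). Once this quantity tends to $0$ in operator norm for each $\alpha$, the graph convergence of the generators, and hence the uniform convergence of the associated semigroups, follows exactly as in the right case, and the strong convergence (\ref{eq:cor_limit}) is recovered by the same density argument on $\mathfrak{F}$. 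So the only thing genuinely to check is that (\ref{eq:left_condition}) vanishes in norm.

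This verification is where the hypothesis $G^{(k)} \equiv G$ does all the work. Writing $\delta S^{(k)} = e^{i\delta\Theta^{(k)}}$, uniform convergence $\delta G^{(k)} \to E$ gives $\Vert \delta S^{(k)} - I \Vert \to 0$, $\Vert \delta L^{(k)} \Vert \to 0$ and $\Vert \delta H^{(k)} \Vert \to 0$, whence $\Vert \delta K^{(k)} \Vert \to 0$ since $\delta K^{(k)} = -\tfrac12 \delta L^{(k)\ast} \delta L^{(k)} - i\, \delta H^{(k)}$ by (\ref{eq:K}). Inspecting (\ref{eq:left_condition}) line by line, every summand is a product of operators in which at least one factor is one of the vanishing quantities $\delta S^{(k)} - I$, $\delta L^{(k)}$ or $\delta K^{(k)}$, while all remaining factors are drawn from the fixed bounded operators $S, L, K$ and their adjoints, the fixed scalar $\alpha$, and the uniformly bounded $\delta S^{(k)}$ (with $\Vert \delta S^{(k)} \Vert = 1$). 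By the triangle inequality and submultiplicativity of the operator norm, each summand is of order $\Vert \delta S^{(k)} - I \Vert + \Vert \delta L^{(k)} \Vert + \Vert \delta K^{(k)} \Vert \to 0$, so (\ref{eq:left_condition}) tends to $0$ in norm for each fixed $\alpha$, as required.

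The one point deserving care, and the reason the general divergent left perturbation is excluded from Theorem \ref{thm:main_divergent}, is the possibility of competing scalings in cross terms such as $L^{(k)\ast}\,\delta L^{(k)}$, where a divergence of $L^{(k)}$ could cancel the smallness of $\delta L^{(k)}$. Fixing $G^{(k)} \equiv G$ removes exactly this danger: with $L^{(k)} = L$ bounded and independent of $k$, no such compensation can occur and the estimate above is completely routine. I therefore expect the main obstacle to be not the estimate itself but the bookkeeping that identifies (\ref{eq:left_condition}) as the correct generator difference whose norm-vanishing drives the Trotter-Kato step; this identification, however, is already carried out in the proof of the theorem, so the corollary follows by invoking it.
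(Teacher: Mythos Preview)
Your proposal is correct and follows essentially the same route as the paper: the right case is an immediate specialization of Theorem~\ref{thm:main_divergent} to the constant sequence $G^{(k)}\equiv G$, and for the left case the paper likewise indicates that the argument of Section~\ref{sec:proof} may be retraced once one checks that the expression (\ref{eq:left_condition}) vanishes in norm, which holds because the fixed operators $S,L$ prevent any competing divergence against the vanishing $\delta S^{(k)}-I$, $\delta L^{(k)}$, $\delta K^{(k)}$. Your bookkeeping of the norm estimate is exactly the verification the paper leaves implicit.
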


We note that a result related to the corollary has been obtained by Lindsay and Wills \cite
{LW07} for the case of perturbations about a fixed model. Since they worked in the more general framework of
contraction co-cycles (instead of unitaries) their results entail a weaker
mode of convergence, namely in the weak operator topology.

\bigskip

Theorem \ref{thm:main_divergent} motivates the following definition.

\begin{definition}
Let $(G^{(k)})_{k}$ and $(\widetilde{G}^{(k)})_{k}$ be sequences of models with the same system Hilbert
space $\mathfrak{h}$ and multiplicity space $\mathfrak{K}$. The sequences are asymptotically equivalent if there exists a sequence $(\delta G^{(k)})_{k}$ converging to $E$ such that $\widetilde{G}^{(k)} = G^{(k)} \vartriangleleft \delta G^{(k)}$.
\end{definition}

Evidently, asymptotically equivalent sequences of models satisfy equation (\ref{eq:thm_limit}).

\section{Proof of the main result}
\label{sec:proof}
The displacement operator $W_{\alpha }(t)=e^{\sum_{j}[\alpha _{j}A_{j}(t)^\ast  -\alpha _{j}^{\ast  
}A_{j}(t)]}$, for $\alpha \in \mathbb{C}^{n}$, is the solution to the QSDE 
\begin{eqnarray}
dW_{\alpha }\left( t\right) =\left\{ \alpha _{j}dA_{j}\left( t\right) ^{\ast  
}-\alpha _{j}^\ast  dA_{j}\left( t\right) -\frac{1}{2}\left| \alpha
_{j}\right| ^{2}dt\right\} W_{\alpha }\left( t\right) , 
\end{eqnarray}
equal to the identity at time zero.
We may think of $W_{\alpha }\left( t\right) $ as the unitary obtained from the
QSDE with the SLH coefficients $(I_{n},\alpha ,0)$. For a generator $G\sim \left[ S,L,K\right] $, we may perturb it with the generator of the displacement by means of the series product to get 
\begin{eqnarray}
G(\alpha )=G\vartriangleleft \left[ I_{n},\alpha ,-\frac{1}{2}\left| \alpha
\right| ^{2}\right] \sim \left[ S,L+S\alpha ,%
K-\frac{1}{2}\left| \alpha \right| ^{2}-L^\ast  S%
\alpha \right] . 
\label{eq:displaced}
\end{eqnarray}
Likewise, for a perturbed model $\widetilde{G}$ we may form $\widetilde{G}%
(\alpha )=\widetilde{G}\vartriangleleft \left[ I_{n},\alpha ,-\frac{1}{2}%
\left| \alpha \right| ^{2}\right] $.

\begin{lemma}
\label{lem:norm_continuity} Let $T_t $ be defined by the partial
trace over the Fock space as follows 
\begin{eqnarray}
T_t (X) \triangleq \text{tr}_{\mathfrak{F}} \left\{ I_{%
\mathfrak{h}} \otimes | \Omega \rangle \langle \Omega | \, U_{\widetilde{G}
(\alpha )} (t)^\ast   (X \otimes I_{\mathfrak{F}} ) U_{G(\alpha )}
(t) \right\} ,
\end{eqnarray}
where $\Omega$ is the Fock vacuum vector. Then $\left( T_t \right)_{t \ge 0 }$ is a norm continuous
one-parameter semigroup.
\end{lemma}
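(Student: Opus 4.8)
The plan is to reduce $T_t$ to a two-sided vacuum expectation, read off the semigroup property from the Hudson--Parthasarathy cocycle identity together with the continuous tensor factorisation of Fock space, and then obtain norm continuity by exhibiting a bounded generator via the quantum Ito formula. First I would note that, because $|\Omega\rangle\langle\Omega|$ is a rank-one projection on $\mathfrak F$, the partial trace collapses to a vacuum matrix element: for $\phi,\psi\in\mathfrak h$,
\[
\langle\phi|T_t(X)|\psi\rangle
= \langle\phi\otimes\Omega|\,U_{\widetilde G(\alpha)}(t)^\ast\,(X\otimes I)\,U_{G(\alpha)}(t)\,|\psi\otimes\Omega\rangle .
\]
In particular $T_t$ is a well-defined linear map on $\mathcal B(\mathfrak h)$ with $\|T_t\|\le 1$ (both unitaries are isometric on $\mathfrak h\otimes\mathfrak F$) and $T_0=\mathrm{id}$. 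Abbreviating $U(t)=U_{G(\alpha)}(t)$ and $\widetilde U(t)=U_{\widetilde G(\alpha)}(t)$, the starting point is the left cocycle identity $U(s+t)=\theta_s\big(U(t)\big)\,U(s)$, where $\theta_s$ denotes the time shift (the second quantisation of the shift on $L^2$) carrying operators adapted to $[0,t]$ onto operators adapted to $[s,s+t]$ and fixing both the vacuum and $X\otimes I$. The same identity holds for $\widetilde U$, so that
\[
\widetilde U(s+t)^\ast(X\otimes I)U(s+t)
= \widetilde U(s)^\ast\,\theta_s\!\big(\widetilde U(t)^\ast(X\otimes I)U(t)\big)\,U(s).
\]

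The key structural step is the semigroup law $T_{s+t}=T_s\circ T_t$. Using the factorisations $\mathfrak F\cong\mathfrak F_{[0,s]}\otimes\mathfrak F_{[s,s+t]}\otimes\mathfrak F_{[s+t,\infty)}$ and $\Omega=\Omega_{[0,s]}\otimes\Omega_{[s,s+t]}\otimes\Omega_{[s+t,\infty)}$, I would first take the vacuum conditional expectation over the middle factor $\mathfrak F_{[s,s+t]}$. Because $\widetilde U(s)^\ast$ and $U(s)$ act as the identity on that factor they pull out of this expectation, while shift covariance of the vacuum yields $\langle\Omega_{[s,s+t]}|\theta_s(\widetilde U(t)^\ast(X\otimes I)U(t))|\Omega_{[s,s+t]}\rangle=T_t(X)$ as an operator on $\mathfrak h$. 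The residual vacuum expectation over $\mathfrak F_{[0,s]}$ then produces $T_s\big(T_t(X)\big)$, giving the semigroup property. I expect this to be the main obstacle: one must verify that the two distinct cocycles flanking $X$ shift consistently and that the middle expectation genuinely closes up to $T_t(X)$ rather than to some entangled remainder, which is exactly where the adaptedness of the factors and the shift invariance of $\Omega$ are used.

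For norm continuity I would apply the quantum Ito formula to $\widetilde U(t)^\ast(X\otimes I)U(t)$ and take vacuum matrix elements. Evaluated on the vacuum (a coherent vector with vanishing test function), the Hudson--Parthasarathy first fundamental formula annihilates the $dA$, $dA^\ast$ and $d\Lambda$ contributions, leaving only the $dt$ term. This yields the integral equation
\[
T_t(X)=X+\int_0^t T_r\big(\mathcal G(X)\big)\,dr,
\]
where $\mathcal G$ is the two-sided analogue of the Lindblad generator (\ref{eq:Lindbladian}) built from the displaced coefficients $L(\alpha)=L+S\alpha$, $\widetilde L(\alpha)=\widetilde L+\widetilde S\alpha$ and the associated damping operators $K(\alpha)$, $\widetilde K(\alpha)$; concretely $\mathcal G(X)=\widetilde L(\alpha)^\ast X\,L(\alpha)-\widetilde K(\alpha)^\ast X-X\,K(\alpha)$. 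Provided the underlying SLH coefficients are bounded, $\mathcal G$ is a bounded superoperator on $\mathcal B(\mathfrak h)$, whence $\|T_t(X)-X\|\le\int_0^t\|\mathcal G(X)\|\,dr\le Ct\|X\|$ using $\|T_r\|\le 1$. Thus $\|T_t-\mathrm{id}\|\to 0$ as $t\downarrow 0$, and combined with the semigroup law the estimate $\|T_{t+h}-T_t\|=\|T_t(T_h-\mathrm{id})\|\le\|T_h-\mathrm{id}\|$ upgrades this to norm continuity on all of $[0,\infty)$; equivalently $T_t=e^{t\mathcal G}$ with bounded generator $\mathcal G$.
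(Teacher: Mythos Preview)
Your proposal is correct and follows essentially the same approach as the paper: the semigroup law from the cocycle property, contractivity from unitarity plus the norm-contractivity of the vacuum conditional expectation, and norm continuity from boundedness of the generator (which you make explicit via the quantum Ito formula and vacuum evaluation). The paper's own proof is simply a terser version of the same argument, asserting each of these steps without the explicit computations you supply.
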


\begin{proof}
The semigroup property of $T_{t}$ follows immediately from the
cocycle property (w.r.t. the shift) of $U_{\widetilde{G}(\alpha )}(t)$ and $%
U_{G(\alpha )}(t)$. Since the conditional expectation given by the partial
trace is norm-contractive and $U_{\widetilde{G}(\alpha )}(t)$ and $%
U_{G(\alpha )}(t)$ are unitary, we have 
\begin{eqnarray}
\left\| T_{t}(X)\right\| \leq \left\| U_{\widetilde{G}(\alpha
)}(t)XU_{G(\alpha )}(t)\right\| \leq \left\| U_{\widetilde{G}(\alpha
)}(t)\right\| \left\| X\right\| \left\| U_{G(\alpha )}(t)\right\| \leq \Vert
X\Vert , 
\end{eqnarray}
and hence $T_{t}$ is norm-contractive. Note that due to the
boundedness of all coefficients in the QSDEs for $U_{\widetilde{G}(\alpha
)}(t)$ and $U_{G(\alpha )}(t)$, it follows immediately that the generator of 
$T_{t}$ is bounded. This means that $T_{t}$ is
norm-continuous.
\end{proof}

\begin{lemma}
\label{lem:dt_form} Let $G_{i}\sim \left[ S_{i},L_{i},%
K_{i}\right] $ then 
\begin{eqnarray*}
&&d\left( U_{G_{2}}\left( t\right) ^\ast  (X\otimes I_{\mathfrak{F}%
})U_{G_{1}}(t)\right) \nonumber \\
&=& U_{G_{2}}\left( t\right) ^\ast  \Big[ (K%
_{2}^\ast  X+XK_{1}+L_{2}^\ast  X L _{1})\otimes
I_{\mathfrak{F}}\Big] U_{G_{1}}(t)\,dt +\cdots
\end{eqnarray*}
where the remaining terms are proportional to $d\Lambda
_{ij},dA_{i},dA_{j}^\ast  $.
\end{lemma}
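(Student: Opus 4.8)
The plan is to compute the Itô differential of the triple product $U_{G_2}(t)^\ast (X\otimes I_{\mathfrak{F}})U_{G_1}(t)$ directly from the QSDEs satisfied by the two factors and then read off the coefficient of $dt$. First I would record the differential of each factor. The forward process obeys $dU_{G_1}=(dG_1)U_{G_1}$ with $dG_1$ given by the explicit form stated earlier, whose $dt$-coefficient is $K_1$. For the adjoint factor I would use that $dU_{G_2}=(dG_2)U_{G_2}$ entails, upon taking adjoints of the quantum stochastic integral, $d(U_{G_2}^\ast)=U_{G_2}^\ast (dG_2)^\ast$. I would then compute $(dG_2)^\ast$ term by term using $\Lambda_{jk}^\ast=\Lambda_{kj}$, $(dA_j^\ast)^\ast=dA_j$ and $(dA_k)^\ast=dA_k^\ast$. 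The only features of $(dG_2)^\ast$ needed downstream are that its $dt$-coefficient is $K_2^\ast$ and that its annihilation differential $dA_i$ carries the coefficient $L_{2,i}^\ast$.

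Next I would apply the quantum Itô product rule to the triple product. Writing $Y\triangleq X\otimes I_{\mathfrak{F}}$, which is time-independent so that $dY=0$, the only surviving correction is the pairwise one between the two stochastic factors, giving
\[
d\big(U_{G_2}^\ast Y U_{G_1}\big)=(dU_{G_2}^\ast)\,Y\,U_{G_1}+U_{G_2}^\ast\,Y\,(dU_{G_1})+(dU_{G_2}^\ast)\,Y\,(dU_{G_1}).
\]
The first two summands contribute, to the $dt$-part, $U_{G_2}^\ast(K_2^\ast X\otimes I_{\mathfrak{F}})U_{G_1}\,dt$ and $U_{G_2}^\ast(XK_1\otimes I_{\mathfrak{F}})U_{G_1}\,dt$ respectively, since $K_2^\ast Y=(K_2^\ast X)\otimes I_{\mathfrak{F}}$ and $YK_1=(XK_1)\otimes I_{\mathfrak{F}}$; their remaining pieces are noise terms proportional to $d\Lambda_{ij},dA_i,dA_j^\ast$.

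The crux is the Itô correction $(dU_{G_2}^\ast)\,Y\,(dU_{G_1})=U_{G_2}^\ast (dG_2)^\ast\,Y\,(dG_1)\,U_{G_1}$, which I would evaluate using the quantum Itô table. The sole product of fundamental differentials that yields a $dt$ is $dA_i\,dA_j^\ast=\delta_{ij}\,dt$; every other admissible product ($dA_i\,d\Lambda_{jk}$, $d\Lambda_{ij}\,dA_k^\ast$, $d\Lambda_{ij}\,d\Lambda_{kl}$) returns one of $dA$, $dA^\ast$ or $d\Lambda$. Hence the only $dt$-contribution comes from pairing the annihilation differential $L_{2,i}^\ast\,dA_i$ of $(dG_2)^\ast$ with the creation differential $L_{1,j}\,dA_j^\ast$ of $(dG_1)$; since the system operators commute with the Fock differentials, this produces $L_{2,i}^\ast Y L_{1,i}\,dt=(L_2^\ast X L_1\otimes I_{\mathfrak{F}})\,dt$ after summation over $i$. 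Adding the three $dt$-contributions yields the claimed coefficient $K_2^\ast X+XK_1+L_2^\ast X L_1$, with all residual differentials absorbed into the $+\cdots$.

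The main obstacle I anticipate is bookkeeping rather than conceptual: correctly forming the adjoint differential $(dG_2)^\ast$ — in particular the index swap $\Lambda_{jk}^\ast=\Lambda_{kj}$ and the interchange of the roles played by the creation and annihilation coefficients — and then checking exhaustively against the Itô table that no product other than $dA_i\,dA_j^\ast$ feeds the $dt$-coefficient. One must also respect the operator ordering dictated by the correction term $(dG_2)^\ast\,Y\,(dG_1)$, so that $X$ sits precisely between $L_2^\ast$ and $L_1$ in the final expression.
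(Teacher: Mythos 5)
Your proof is correct and is exactly the argument the paper intends: the paper simply asserts that the lemma ``follows directly from the quantum It\={o} calculus,'' and your computation---adjoint differential $dU_{G_2}^\ast = U_{G_2}^\ast (dG_2)^\ast$, the product rule with the It\={o} correction $(dU_{G_2}^\ast)\,Y\,(dU_{G_1})$, and the observation that only $dA_i\,dA_j^\ast=\delta_{ij}\,dt$ contributes to the $dt$-coefficient---is precisely that calculation written out in full. No gaps; the bookkeeping (index swap in $\Lambda_{ij}^\ast$, operator ordering placing $X$ between $L_2^\ast$ and $L_1$) is handled correctly.
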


This follows directly from the quantum It\={o} calculus.

\begin{proposition}
\label{prop:dt_form} 
For a given model $G$ and a right series product perturbation $\widetilde{G}=G\vartriangleleft
\delta G$, where $\delta G \sim (e^{i \delta \Theta} , \delta L ,\delta H)$, 
we consider the displaced models $G\left( \alpha \right) $ and $%
\widetilde{G}\left( \alpha \right) $ as in (\ref{eq:displaced}).
\begin{eqnarray*}
&&d\left\{ U_{\widetilde{G}(\alpha )}(t)^\ast  (X\otimes I_{%
\mathfrak{F}})  U_{G(\alpha )}(t)  \right\} \\
&=&U_{\widetilde{G}(\alpha )}(t)^\ast   \Big[ (K^\ast  +\delta K%
^\ast  -\delta L^\ast  S^\ast  L-\alpha ^{\ast  
}e^{-i\delta \Theta }S^\ast  L-\alpha ^{\ast  
}e^{-i\delta \Theta }\delta L)X \\
&&+X\left( K-L^\ast  S\alpha \right) \\
&&+\left( L^\ast  +\delta L^\ast  S^\ast  +\alpha ^\ast  e^{-i\delta \Theta }S^\ast  \right) 
XL+S\alpha ) \Big]\otimes I_{\mathfrak{F}} \, U_{G(\alpha )}(t)  \,dt+%
\cdots
\end{eqnarray*}
where $\delta K = -i \delta H - \frac{1}{2} ( \delta L)^\ast \delta L$, and again the remaining terms are proportional to $d\Lambda_{ij},dA_{i},dA_{j}^\ast  $.
\end{proposition}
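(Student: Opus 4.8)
The plan is to reduce the statement to a single application of Lemma~\ref{lem:dt_form}. Both $G(\alpha)$ and $\widetilde{G}(\alpha)$ are ordinary SLH models, since each is a series product of SLH models, so Lemma~\ref{lem:dt_form} applies with $G_1 = G(\alpha)$ and $G_2 = \widetilde{G}(\alpha)$ and tells us that the $dt$-part of $d\{U_{\widetilde{G}(\alpha)}(t)^\ast (X \otimes I_{\mathfrak{F}}) U_{G(\alpha)}(t)\}$ is governed by the single operator
\[
\widetilde{K}(\alpha)^\ast X + X\,K(\alpha) + \widetilde{L}(\alpha)^\ast X\,L(\alpha),
\]
with the suppressed contributions being precisely the $d\Lambda_{ij}, dA_i, dA_j^\ast$ terms already isolated by the lemma. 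Thus no further quantum It\={o} calculus is required; the whole task is to identify this $dt$-coefficient in terms of the data $(S,L,K)$ and $(e^{i\delta\Theta},\delta L,\delta K)$.

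Next I would read off the displaced coefficients. For the unperturbed leg, (\ref{eq:displaced}) gives $L(\alpha) = L + S\alpha$ and $K(\alpha) = K - \tfrac{1}{2}|\alpha|^2 - L^\ast S\alpha$ at once. For the perturbed leg I would compose two series products: first $\widetilde{G} = G \vartriangleleft \delta G$, which in the $[S,L,K]$ representation of the Remark yields $\widetilde{S} = S e^{i\delta\Theta}$, $\widetilde{L} = L + S\,\delta L$ and $\widetilde{K} = K + \delta K - L^\ast S\,\delta L$; then displacing by $[I_n,\alpha,-\tfrac{1}{2}|\alpha|^2]$ gives $\widetilde{L}(\alpha) = \widetilde{L} + \widetilde{S}\alpha$ and $\widetilde{K}(\alpha) = \widetilde{K} - \tfrac{1}{2}|\alpha|^2 - \widetilde{L}^\ast \widetilde{S}\alpha$.

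The substitution into the $dt$-coefficient is then routine, the real simplifications being the unitarity relations $S^\ast S = I$ and $e^{-i\delta\Theta} e^{i\delta\Theta} = I$. For example, these collapse the awkward product
\[
\widetilde{L}^\ast \widetilde{S} = (L + S\,\delta L)^\ast S e^{i\delta\Theta} = L^\ast S e^{i\delta\Theta} + \delta L^\ast e^{i\delta\Theta},
\]
and expanding $\widetilde{L}(\alpha)^\ast = L^\ast + \delta L^\ast S^\ast + \alpha^\ast e^{-i\delta\Theta} S^\ast$ I would group the resulting terms according to whether they multiply $X$ from the left, from the right, or sandwich it, which should reproduce the three bracketed groups in the statement. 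The one place needing attention is the bookkeeping of the scalar $|\alpha|^2$ terms generated by the two displacements, which must be tracked alongside the cross terms $L^\ast S\alpha$ and $\alpha^\ast e^{-i\delta\Theta} S^\ast L$.

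The main obstacle is organizational rather than conceptual: there is no analytic subtlety, but the number of cross terms (linear and quadratic in $\alpha$, linear in $\delta L$, and carrying factors $e^{\pm i\delta\Theta}$) is large, so the danger is an algebraic slip. I would therefore sort every term by its position relative to $X$ and, as a final consistency check, verify that setting $\delta G = E$ and $\alpha = 0$ collapses the expression to $K^\ast X + X K + L^\ast X L$, in agreement with Lemma~\ref{lem:dt_form} for the case $G_1 = G_2 = G$.
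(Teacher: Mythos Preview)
Your proposal is correct and follows essentially the same route as the paper: compute the $[S,L,K]$ data for $G(\alpha)$ and for $\widetilde{G}(\alpha)=\big(G\vartriangleleft\delta G\big)\vartriangleleft[I,\alpha,-\tfrac12|\alpha|^2]$, then apply Lemma~\ref{lem:dt_form} with $G_1=G(\alpha)$ and $G_2=\widetilde G(\alpha)$. The paper's proof is terser but carries out exactly this computation and substitution.
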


\begin{proof}
We have already $G(\alpha )\sim \left[ S,L+S%
\alpha ,K-\frac{1}{2}\left| \alpha \right| ^{2}-L^\ast  %
S\alpha \right] $ while 
\begin{eqnarray*}
\widetilde{G}(\alpha ) &=&\big[ Se^{i\delta \Theta },%
L+S\delta L,K+\delta K-L^\ast  S\delta L\big] \vartriangleleft \left[
I,\alpha ,-\frac{1}{2}\left| \alpha \right| ^{2}\right]  \nonumber \\
&=& \bigg[Se^{i\delta \Theta },L+S\delta 
L+Se^{i\delta \Theta }\alpha , \\
&&K+\delta K-L^\ast   S\delta L- L^\ast  Se^{i\delta \Theta }\alpha -\left(
\delta L\right) ^\ast  e^{i\delta \Theta }\alpha -\frac{1%
}{2}\left| \alpha \right| ^{2} \bigg] .
\end{eqnarray*}
The result now follows from Lemma \ref{lem:dt_form}.
\end{proof}

\begin{lemma}
\label{lem:final} For the models $G\left( \alpha \right) $ and $\widetilde{G}%
\left( \alpha \right) $ with $\widetilde{G}=G\vartriangleleft \delta G$, we
have 
\begin{eqnarray*}
&&d \bigg\{  U_{ \widetilde{G}(\alpha )}  (t)^\ast  U_{G(\alpha )}(t)\bigg\} = 
 U_{\widetilde{G}(\alpha )}(t)^\ast   \Big[ \delta K^\ast  \\
&&-\alpha ^\ast   e^{-i\delta \Theta } \delta L+(\delta L )^\ast  \alpha +\alpha ^\ast  \left( e^{-i\delta \Theta }-1\right) \alpha \Big] U_{G(\alpha )} (t) \bigg\} \, dt + \cdots
\end{eqnarray*}
where again the remaining terms are proportional to $d\Lambda
_{ij},dA_{i},dA_{j}^\ast  $.
\end{lemma}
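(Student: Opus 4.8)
The key observation is that Lemma \ref{lem:final} is simply the specialization of Proposition \ref{prop:dt_form} to $X=I_{\mathfrak{h}}$, so the cleanest route is to apply Lemma \ref{lem:dt_form} directly to the displaced pair $G(\alpha)$ and $\widetilde{G}(\alpha)$ with the identity in place of $X$. Writing the two models as $G(\alpha)\sim[S,L_{\alpha},K_{\alpha}]$ and $\widetilde{G}(\alpha)\sim[Se^{i\delta\Theta},\widetilde{L}_{\alpha},\widetilde{K}_{\alpha}]$, where the coefficients are exactly those computed in the proof of Proposition \ref{prop:dt_form}, Lemma \ref{lem:dt_form} tells us that the $dt$-coefficient of $d\{U_{\widetilde{G}(\alpha)}(t)^{\ast}U_{G(\alpha)}(t)\}$, sandwiched between $U_{\widetilde{G}(\alpha)}(t)^{\ast}$ on the left and $U_{G(\alpha)}(t)$ on the right, is simply
\[
\widetilde{K}_{\alpha}^{\ast}+K_{\alpha}+\widetilde{L}_{\alpha}^{\ast}L_{\alpha},
\]
while all remaining terms, proportional to $d\Lambda_{ij},dA_{i},dA_{j}^{\ast}$, are carried along unchanged and require no further attention.

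First I would substitute the explicit coefficients $L_{\alpha}=L+S\alpha$, $\widetilde{L}_{\alpha}=L+S\delta L+Se^{i\delta\Theta}\alpha$, $K_{\alpha}=K-\tfrac{1}{2}|\alpha|^{2}-L^{\ast}S\alpha$, and $\widetilde{K}_{\alpha}=K+\delta K-L^{\ast}S\delta L-L^{\ast}Se^{i\delta\Theta}\alpha-(\delta L)^{\ast}e^{i\delta\Theta}\alpha-\tfrac{1}{2}|\alpha|^{2}$. Expanding $\widetilde{L}_{\alpha}^{\ast}L_{\alpha}$ and invoking the unitarity $S^{\ast}S=I$ to reduce $(\delta L)^{\ast}S^{\ast}S\alpha$ to $(\delta L)^{\ast}\alpha$ and $\alpha^{\ast}e^{-i\delta\Theta}S^{\ast}S\alpha$ to $\alpha^{\ast}e^{-i\delta\Theta}\alpha$, the whole $dt$-coefficient becomes a sum of about a dozen monomials. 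The main --- and essentially the only --- work is the cancellation bookkeeping: I expect three pairs to cancel outright, namely the $L^{\ast}S\alpha$ arising from $\widetilde{L}_{\alpha}^{\ast}L_{\alpha}$ against the $-L^{\ast}S\alpha$ in $K_{\alpha}$, the $(\delta L)^{\ast}S^{\ast}L$ against the $-(\delta L)^{\ast}S^{\ast}L$ sitting inside $\widetilde{K}_{\alpha}^{\ast}$, and the $\alpha^{\ast}e^{-i\delta\Theta}S^{\ast}L$ against its negative counterpart in $\widetilde{K}_{\alpha}^{\ast}$.

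What then survives is $(K^{\ast}+K+L^{\ast}L)+\delta K^{\ast}-\alpha^{\ast}e^{-i\delta\Theta}\delta L+(\delta L)^{\ast}\alpha+\alpha^{\ast}e^{-i\delta\Theta}\alpha-|\alpha|^{2}$, where the $-|\alpha|^{2}$ is the combined effect of the two $-\tfrac{1}{2}|\alpha|^{2}$ contributions coming from $K_{\alpha}$ and from $\widetilde{K}_{\alpha}^{\ast}$. The final step invokes the defining relation $K+K^{\ast}=-L^{\ast}L$, which follows from (\ref{eq:K}) and annihilates the parenthesized block entirely, and rewrites $\alpha^{\ast}e^{-i\delta\Theta}\alpha-|\alpha|^{2}=\alpha^{\ast}(e^{-i\delta\Theta}-1)\alpha$ using $|\alpha|^{2}=\alpha^{\ast}\alpha$. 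This leaves precisely the claimed coefficient $\delta K^{\ast}-\alpha^{\ast}e^{-i\delta\Theta}\delta L+(\delta L)^{\ast}\alpha+\alpha^{\ast}(e^{-i\delta\Theta}-1)\alpha$, with the noise differentials inherited verbatim from Lemma \ref{lem:dt_form}. I do not anticipate any genuine obstacle: the computation is a routine specialization, and the only places demanding care are tracking the adjoint signs on the $e^{\pm i\delta\Theta}$ factors and correctly combining the two scalar $\tfrac{1}{2}|\alpha|^{2}$ pieces, since it is exactly these that produce the nontrivial $-1$ inside $\alpha^{\ast}(e^{-i\delta\Theta}-1)\alpha$.
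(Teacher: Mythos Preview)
Your proposal is correct and follows essentially the same route as the paper, which simply records that the lemma is Proposition \ref{prop:dt_form} specialized to $X=I$ combined with the identity (\ref{eq:K+Kdag}); you have merely spelled out the cancellations that the paper leaves implicit. Your choice to derive $K+K^\ast=-L^\ast L$ directly from (\ref{eq:K}) is prudent, since the sign in (\ref{eq:K+Kdag}) as printed is a typo.
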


This follows immediately from Proposition \ref{prop:dt_form} and the
identity (\ref{eq:K+Kdag}). Note that, specifically
\begin{eqnarray}
\delta K=-i\delta H- \frac{1}{2} \left( \delta L \right) ^\ast   \delta L.
\end{eqnarray}
The steps in this section may be retraced for the left series product perturbation leading to the more involved term corresponding to (\ref{eq:left_condition}).

We now have the following immediate corollary to Lemma \ref{lem:final} .

\begin{corollary}
\label{cor:Lto0} Let $\mathscr{L}$ be the generator of the
semigroup $T_{t}$. Then if the perturbation becomes trivial,
that is $\delta G\rightarrow E$ uniformly, then $\mathscr{L}(I)\rightarrow
0$ uniformly. (This means that $\mathscr{L}$ vanishes on the Banach algebra $\mathcal{B}_0 \equiv \mathbb{C} \, I$.)
\end{corollary}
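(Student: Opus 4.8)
The plan is to compute the bounded generator $\mathscr{L}$ explicitly and to observe that $\mathscr{L}(I)$ is exactly the operator appearing in the $dt$-coefficient of Lemma \ref{lem:final}, which manifestly vanishes in the limit $\delta G \to E$. First I would invoke Lemma \ref{lem:norm_continuity}: since $(T_t)_{t\ge 0}$ is a norm-continuous one-parameter semigroup, its generator $\mathscr{L}$ is bounded and may be recovered by differentiation at the origin, $\mathscr{L}(X) = \frac{d}{dt}\big|_{t=0} T_t(X)$, with no domain issues to address.

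Next I would differentiate $T_t(I)$ using the quantum It\={o} expansion. Specializing Proposition \ref{prop:dt_form} to $X = I$, Lemma \ref{lem:final} records that $d\{U_{\widetilde{G}(\alpha)}(t)^\ast U_{G(\alpha)}(t)\}$ splits into a $dt$-part whose coefficient is the operator $\delta K^\ast - \alpha^\ast e^{-i\delta\Theta}\delta L + (\delta L)^\ast \alpha + \alpha^\ast(e^{-i\delta\Theta}-1)\alpha$ on $\mathfrak{h}$, plus terms proportional to $d\Lambda_{ij}, dA_i, dA_j^\ast$. The key point is that the partial trace against the Fock vacuum $|\Omega\rangle\langle\Omega|$ annihilates the noise differentials, since $dA_i(t)$, $dA_i^\ast(t)$ and $d\Lambda_{ij}(t)$ have vanishing vacuum matrix elements; hence only the $dt$-term survives. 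Evaluating at $t=0$, where $U_{\widetilde{G}(\alpha)}(0)=U_{G(\alpha)}(0)=I$ and $T_0=\mathrm{id}$, yields $\mathscr{L}(I) = \delta K^\ast - \alpha^\ast e^{-i\delta\Theta}\delta L + (\delta L)^\ast \alpha + \alpha^\ast(e^{-i\delta\Theta}-1)\alpha$, with $\delta K = -i\delta H - \frac{1}{2}(\delta L)^\ast \delta L$.

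Finally I would estimate this expression in operator norm. Uniform convergence $\delta G \to E$ means $e^{i\delta\Theta}\to I$, $\delta L \to 0$ and $\delta H \to 0$; the first of these gives both $e^{-i\delta\Theta}-1 \to 0$ and $\delta K \to 0$. For each fixed $\alpha \in \mathbb{C}^n$ every summand of $\mathscr{L}(I)$ is a product of uniformly bounded factors (the unitaries $e^{\pm i\delta\Theta}$ and the fixed scalars $\alpha_i$) with at least one factor tending to zero, so $\Vert \mathscr{L}(I)\Vert \to 0$. By linearity $\mathscr{L}(cI) = c\,\mathscr{L}(I) \to 0$ for every $c\in\mathbb{C}$, which is precisely the statement that $\mathscr{L}$ vanishes on $\mathcal{B}_0 = \mathbb{C}\,I$ in the limit. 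The only step that is more than routine is the second one: extracting the generator of the vacuum partial-trace semigroup directly from the $dt$-coefficient, i.e. confirming that the fundamental It\={o} increments contribute nothing under the vacuum conditional expectation. Once this is granted---and it follows from the standard vanishing of vacuum expectations of $dA_i, dA_i^\ast, d\Lambda_{ij}$ together with the norm-continuity supplied by Lemma \ref{lem:norm_continuity}---the corollary is indeed immediate.
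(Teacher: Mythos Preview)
Your proposal is correct and is essentially the argument the paper has in mind: the paper declares the corollary ``immediate'' from Lemma~\ref{lem:final}, and you have simply spelled out why---identifying $\mathscr{L}(I)$ with the $dt$-coefficient of that lemma via the vacuum partial trace and Lemma~\ref{lem:norm_continuity}, then reading off the uniform convergence to zero term by term. There is no substantive difference in approach.
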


\bigskip

We now turn to the proof of Theorem \ref{thm:main_divergent}. We rely heavily on the Trotter-Kato
theorem \cite{Trotter,Kato}. We have taken the formulation of the
Trotter-Kato theorem from \cite[Thm 3.17, page 80]{Dav80}.

\begin{theorem}[Trotter-Kato Theorem]
\label{thm:Trotter-Kato} Let $\mathcal{B}$ be a Banach space and let $%
\mathcal{B}_0$ be a closed subspace of $\mathcal{B}$. For each $n \ge 0$,
let $T_t^{(n)}$ be a strongly continuous one-parameter contraction semigroup
on $\mathcal{B}$ with generator $\mathscr{L}^{(n)}$. Moreover, let $T_t$ be
a strongly continuous one-parameter contraction semigroup on $\mathcal{B}_0$
with generator $\mathscr{L}$. Let $\mathcal{D}$ be a core for $\mathscr{L}$.
The following conditions are equivalent:

\begin{enumerate}
\item  For all $X \in \mathcal{D}$ there exist $X^{(n)} \in \mbox{Dom}\left(%
\mathscr{L}^{(n)}\right)$ such that 
\begin{eqnarray*}
\lim_{n \to \infty} X^{(n)} = X, \qquad \lim_{n \to \infty} \mathscr{L}%
^{(n)}\left(X^{(n)}\right) = \mathscr{L}(X).
\end{eqnarray*}

\item  For all $0 \le s < \infty$ and all $X \in \mathcal{B}_0$ 
\begin{eqnarray*}
\lim_{n \to \infty} \left\{\sup_{0 \le t \le s} \left\|T_t^{(n)}(X) -
T_t(X)\right\|\right\} = 0.
\end{eqnarray*}
\end{enumerate}
\end{theorem}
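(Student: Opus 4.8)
The plan is to route the entire equivalence through the resolvents, since by the Hille--Yosida theorem each contraction generator $\mathscr{L}^{(n)}$ (resp. $\mathscr{L}$) has, for every $\lambda > 0$, a bounded resolvent $R^{(n)}_\lambda = (\lambda I - \mathscr{L}^{(n)})^{-1}$ (resp. $R_\lambda = (\lambda I - \mathscr{L})^{-1}$) satisfying $\|R^{(n)}_\lambda\| \le 1/\lambda$ and $\|R_\lambda\|\le 1/\lambda$, and given by the Laplace transform $R_\lambda X = \int_0^\infty e^{-\lambda t} T_t X\,dt$. I would prove the intermediate statement that condition (1) is equivalent to strong resolvent convergence, meaning $R^{(n)}_\lambda Y \to R_\lambda Y$ for all $Y \in \mathcal{B}_0$ and one (hence every) $\lambda > 0$, and that the latter is in turn equivalent to condition (2). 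Throughout I use that $\mathcal{D}$ being a core for $\mathscr{L}$ means precisely that $(\lambda I - \mathscr{L})\mathcal{D}$ is dense in $\mathcal{B}_0$.

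First I would show that (1) implies resolvent convergence. Given $X \in \mathcal{D}$ with approximants $X^{(n)} \to X$ and $\mathscr{L}^{(n)} X^{(n)} \to \mathscr{L} X$, set $Y = (\lambda I - \mathscr{L}) X$ and write $Y = (\lambda I - \mathscr{L}^{(n)}) X^{(n)} + r_n$, where $r_n = \lambda(X - X^{(n)}) - (\mathscr{L} X - \mathscr{L}^{(n)} X^{(n)}) \to 0$. Applying $R^{(n)}_\lambda$ gives $R^{(n)}_\lambda Y = X^{(n)} + R^{(n)}_\lambda r_n \to X = R_\lambda Y$, since $\|R^{(n)}_\lambda\| \le 1/\lambda$. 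This establishes convergence on the dense set $(\lambda I - \mathscr{L})\mathcal{D}$, and the uniform bound $\|R^{(n)}_\lambda\|\le 1/\lambda$ extends it to all of $\mathcal{B}_0$ by a standard density argument.

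The hard part will be the converse passage from resolvent convergence to the semigroup convergence (2), because $\mathscr{L}^{(n)}$ and $\mathscr{L}$ have different domains, so one cannot directly integrate $T^{(n)}_{t-r}(\mathscr{L}^{(n)} - \mathscr{L}) T_r X$. My remedy is to insert a resolvent as a smoother: for $Y \in \mathrm{Dom}(\mathscr{L})$ I would differentiate $F(r) = T^{(n)}_{t-r} R^{(n)}_\lambda T_r Y$ on $[0,t]$, using $\tfrac{d}{dr} T_r Y = \mathscr{L} T_r Y$ and $\mathscr{L}^{(n)} R^{(n)}_\lambda = \lambda R^{(n)}_\lambda - I$, to obtain $F'(r) = T^{(n)}_{t-r}(R_\lambda - R^{(n)}_\lambda)(\lambda I - \mathscr{L}) T_r Y$. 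Integrating from $0$ to $t$ yields the key identity
\[
 R^{(n)}_\lambda T_t Y - T^{(n)}_t R^{(n)}_\lambda Y = \int_0^t T^{(n)}_{t-r}\,(R_\lambda - R^{(n)}_\lambda)\, T_r\,(\lambda I - \mathscr{L}) Y \, dr .
\]
Since $\|T^{(n)}_{t-r}\| \le 1$, the integrand is dominated by $r \mapsto \|(R_\lambda - R^{(n)}_\lambda) T_r (\lambda I - \mathscr{L}) Y\| \le (2/\lambda)\,\|(\lambda I - \mathscr{L}) Y\|$, which is integrable on $[0,s]$ and tends to $0$ pointwise by resolvent convergence; dominated convergence then forces the right-hand side to $0$ uniformly for $t \in [0,s]$. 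Combining this with $R^{(n)}_\lambda Y \to R_\lambda Y$ and $R^{(n)}_\lambda T_t Y \to R_\lambda T_t Y = T_t R_\lambda Y$ gives $T^{(n)}_t R_\lambda Y \to T_t R_\lambda Y$ uniformly on $[0,s]$. Finally, the vectors $R_\lambda Y$ with $Y \in \mathrm{Dom}(\mathscr{L})$ are dense in $\mathcal{B}_0$, so the uniform contractivity $\|T^{(n)}_t\| \le 1$ lets a three-$\varepsilon$ argument upgrade this to convergence for every $X \in \mathcal{B}_0$, which is exactly (2).

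To close the equivalence I would record the remaining implication (2) $\Rightarrow$ (1): uniform-on-compacts convergence $T^{(n)}_t Y \to T_t Y$ passes through the Laplace transform by dominated convergence (majorant $e^{-\lambda t}\|Y\|$) to give $R^{(n)}_\lambda Y \to R_\lambda Y$ for $Y \in \mathcal{B}_0$; then for $X \in \mathcal{D}$ the choice $X^{(n)} = R^{(n)}_\lambda (\lambda I - \mathscr{L}) X \in \mathrm{Dom}(\mathscr{L}^{(n)})$ satisfies $X^{(n)} \to R_\lambda(\lambda I - \mathscr{L})X = X$ and $\mathscr{L}^{(n)} X^{(n)} = \lambda X^{(n)} - (\lambda I - \mathscr{L}) X \to \lambda X - (\lambda I - \mathscr{L}) X = \mathscr{L} X$, verifying (1). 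The one genuine obstacle is the domain mismatch resolved by the smoothed identity above; everything else is bookkeeping with the uniform resolvent bound and dominated convergence.
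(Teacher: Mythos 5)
Your proposal is correct in substance, but there is nothing in the paper to compare it against: the paper does not prove this statement, it imports it verbatim from Davies \cite[Thm 3.17, page 80]{Dav80} and uses it as a black box in the proof of Theorem \ref{thm:main_divergent}. Measured against the standard literature proof (Davies, and the original Trotter--Kato papers), your route is exactly the classical one: pass from the graph condition (1) to strong resolvent convergence via $R^{(n)}_{\lambda}Y = X^{(n)} + R^{(n)}_{\lambda}r_n$ on the dense set $(\lambda I - \mathscr{L})\mathcal{D}$ (density being precisely what the core hypothesis delivers, since $R_\lambda$ is bounded), then from resolvent convergence to (2) via the smoothing identity
\begin{equation*}
R^{(n)}_{\lambda}T_{t}Y - T^{(n)}_{t}R^{(n)}_{\lambda}Y
= \int_{0}^{t} T^{(n)}_{t-r}\,\bigl(R_{\lambda}-R^{(n)}_{\lambda}\bigr)\,T_{r}\,(\lambda I - \mathscr{L})Y \, dr ,
\end{equation*}
and back from (2) to (1) through the Laplace transform with the explicit approximants $X^{(n)} = R^{(n)}_{\lambda}(\lambda I-\mathscr{L})X$. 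The identity itself is derived correctly (the product-rule differentiation of $F(r)=T^{(n)}_{t-r}R^{(n)}_{\lambda}T_{r}Y$ is legitimate because $R^{(n)}_{\lambda}$ maps into $\mathrm{Dom}(\mathscr{L}^{(n)})$ and $Y\in\mathrm{Dom}(\mathscr{L})$), and the subspace structure causes no trouble since $R_{\lambda}$ and $T_t$ are only ever applied to vectors of $\mathcal{B}_0$.

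One step is asserted without justification and deserves a line: when you "combine" the integral identity with resolvent convergence to get uniformity on $[0,s]$, the term $\bigl(R^{(n)}_{\lambda}-R_{\lambda}\bigr)T_{t}Y$ must tend to zero \emph{uniformly} in $t\in[0,s]$, not merely pointwise. This is true, but needs the observation that the orbit $\{T_{t}Y : 0\le t\le s\}$ is compact (by strong continuity of $T_t$) and that the operators $R^{(n)}_{\lambda}-R_{\lambda}$ are uniformly bounded by $2/\lambda$, so that strong convergence is automatically uniform on compact sets; the same remark disposes of the dominated-convergence step for the integrand, where pointwise convergence in $r$ already suffices. With that sentence added, the proof is complete and is, for practical purposes, the proof the paper's cited source contains.
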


Let $(G^{(k)})_k$ be a sequence of models with the same system Hilbert
space, $\mathfrak{h}$, and multiplicity space, $\mathfrak{K}= \mathbb{C}^n$,
along with a sequence of perturbations, $(\delta G^{(k)})_k$. We take $%
(\widetilde G ^{(k)})_k$ to be the sequence of models obtained by right
series perturbation: $\widetilde{G}^{(k)}= G^{(k)} \vartriangleleft
\delta G^{(k)}$.

\bigskip

\begin{proof}
\textbf{[of Theorem \ref{thm:main_divergent}]} Let $T_{t}^{(k,\alpha )}$ be defined by 
\begin{eqnarray}
T_{t}^{(k,\alpha )}(X)\triangleq \text{tr}_{\mathfrak{F}%
}\left\{ I_{\mathfrak{h}}\otimes |\Omega \rangle \langle \Omega |\,U_{%
\widetilde{G}^{(k)}(\alpha )}(t)^\ast  (X\otimes I_{\mathfrak{F}%
})U_{G^{(k)}(\alpha )}(t)\right\} .
\end{eqnarray}
From Lemma \ref{lem:norm_continuity} we have that $T_{t}^{(k,\alpha )}$ is a norm-continuous one-parameter semigroup and so has a generator $\mathscr{L}^{(k,\alpha )}$. If we furthermore assume that the $\delta
G^{(k)}\rightarrow E$ then, by Corollary \ref{cor:Lto0}, we have that $\mathscr{L}^{(k,\alpha )}(I)\rightarrow 0$ as $k\rightarrow
\infty $ for all $\alpha \in \mathbb{C}$.

Our goal is to show that $\lim_{k\rightarrow \infty }\Vert
(U_{G^{(k)}}(t)-U_{\widetilde{G}^{(k)}}(t))\Psi \Vert =0$ for arbitrary
state $\Psi $, however, it is clearly sufficient to establish this for all
states of the $\Psi =v\otimes \exp (f)$ where $f$ belongs to some dense
subset of $\mathbb{C}^{n}\otimes L^{2}[0,\infty )$. We note that 
\begin{eqnarray}
&&\Vert (U_{G^{(k)}}(t)-U_{\widetilde{G}^{(k)}}(t))\Psi \Vert =  \nonumber \\
&& 2\Vert v\Vert
^{2}e^{\Vert f\Vert ^{2}} -2\text{Re}\bigg\langle v\otimes \exp (f), \, 
U_{\widetilde{G}^{(k)}}(t)^\ast  U_{G^{(k)}}(t)\, v\otimes \exp (f)\bigg\rangle .  \label{eq:id}
\end{eqnarray}
We now take the test functions $f$ to be simple. That is, there exist an $%
m<\infty $, times $0=t_{0},t_{1}<\cdots <t_{m}$ and constants $\alpha
_{1},\cdots ,\alpha _{m}\in \mathbb{C}^{n}$, such that 
\begin{eqnarray}
f(t)\equiv \sum_{j=1}^{m}\alpha _{j}\,1_{[t_{k-1},t_{k})}(t).
\label{eq:f_simple}
\end{eqnarray}
With $v\in \mathfrak{h}$ arbitrary, we have that (for $\alpha $ fixed and $%
s>t$) 
\begin{eqnarray}
&&\bigg\langle v\otimes \exp (\alpha 1_{[0,s)}),\, U_{\widetilde{G}^{(k)}}(t)^{\ast  
}U_{G^{(k)}}(t) \, v\otimes \exp (\alpha 1_{[0,s)})\bigg\rangle   \nonumber \\
&&\qquad =\bigg\langle v\otimes \Omega ,\, \bigg( U_{\widetilde{G}^{(k)}}(t)W_{\alpha
}(t)\bigg)^\ast   U_{G^{(k)}}W_{\alpha }(t)  \, v\otimes \Omega \bigg\rangle
e^{|\alpha |^{2}t}  \nonumber \\
&&\qquad =\bigg\langle v, \, T_{t}^{(k,\alpha )}(I)v\bigg\rangle e^{|\alpha
|^{2}t}.
\end{eqnarray}
More generally, for $f$ simple as in (\ref{eq:f_simple}), we have 
\begin{eqnarray}
&&\bigg\langle v\otimes \exp (f) , \, U_{\widetilde{G}^{(k)}}(t)^{\ast  
}U_{G^{(k)}}(t)\, v\otimes \exp (f))\bigg\rangle   \nonumber \\
&&\qquad =\bigg\langle v,\, T_{t_{1}}^{(k,\alpha _{1})}\circ \cdots
\circ T_{t_{m}}^{(k,\alpha _m)}(I)v\bigg\rangle \,e^{\sum_{j}|\alpha
_{j}|^{2}(t_{j}-t_{j-1})}.
\end{eqnarray} 
We now use the Trotter-Kato Theorem. In the statement of the Theorem \ref{thm:Trotter-Kato}, we take $\mathcal{B}_0 = \mathbb{C}I$ and $T_t = \text{id} $, the identity on $\mathcal{B}_0$. Note that the generator of $T_t$ is the zero operator. 
Furthermore we take $X^{(n)}$ and $X$ to be the identity $I$, while $\mathscr{L}^{(n)}$ is taken to be $\mathscr{L}^{(k,\alpha )}$ where we replace $k$ with $n$. We have the graph convergence (equivalent condition 1. of Theorem \ref{thm:Trotter-Kato}) if we can show the convergence $\mathscr{L}^{(k,\alpha
)}(I)\rightarrow 0$ as $k\rightarrow \infty $ for all $\alpha \in \mathbb{C}$,
but this is guaranteed by Corollary \ref{cor:Lto0}. This means that we may now invoke equivalent condition 2. of Theorem 
\ref{thm:Trotter-Kato}.

Therefore, we have 
\begin{eqnarray}
&&\lim_{k\rightarrow \infty }\bigg\langle v\otimes \exp (f), \, U_{\widetilde{G}%
^{(k)}}(t)^\ast  U_{G^{(k)}}(t) \, v\otimes \exp (f))\bigg\rangle   \nonumber \\
&&\qquad =\bigg\langle v\otimes \exp (f),\, v\otimes \exp (f)\bigg\rangle ,
\end{eqnarray}
and therefore, by (\ref{eq:id}), we have $\lim_{k\rightarrow \infty }\Vert
(U_{G^{(k)}}(t)-U_{\widetilde{G}^{(k)}}(t))\,v\otimes \exp (f)\Vert =0$. As
the set of simple functions forms a dense subset of $\mathbb{C}^{n}\otimes
L^{2}[0,\infty )$, and the exponential vectors form a total subset of the
Fock space \cite{Par92}, the result follows.
\end{proof}

\section{Applications}
\label{sec:applications}

\subsection{Virtual work}
The concept of virtual work for open quantum systems was introduced in \cite{AFG12} and the essential idea is that if the energy is perturbed by $\delta H$ and the coupling operators by $\delta L$ then the correct variation of energy should take the form 
\begin{eqnarray}
\Delta H   \triangleq \delta H+\text{Im}\left\{ \left( \delta L\right) ^\ast  L\right\} .
\label{eq:Delta_H}
\end{eqnarray}
Their argument is based on the fact that the Lindblad generator $\mathscr{L}$ given in (\ref{eq:Lindbladian}) is invariant with respect to the gauge transformations $L \mapsto RL$, for $R \in \mathbb{C}^{n \times n} $ unitary, and $ (L,H) \mapsto ( L+ \alpha , H + e + \text{Im} \{ \alpha^\ast L \} )$, for $e \in \mathbb{R}$ and $\alpha \in \mathbb{C}^n$. As such, $\Delta H$ in (\ref{eq:Delta_H}) is the variation that is invariant under this class of transformations.

In our formulation, we take a fixed model $G\sim \left( I,L,H\right) $ and consider a virtual perturbation 
$G\rightarrow G^{\prime }=\delta
G\vartriangleleft G\sim \left( I,L^{\prime },H^{\prime }\right) $
by $\delta G=\left(I ,\delta L,\delta H\right) $. Note that this is a left perturbation! 
The virtual work done by
the perturbation is then defined to be 
\begin{eqnarray}
\Delta H \triangleq H^{\prime }-H  \equiv \delta H+\text{Im}\left\{ \left( \delta L\right) ^\ast   L\right\} ,
\label{eq:virtual_work_SLH}
\end{eqnarray}
and this is identical with definition (\ref{eq:Delta_H}) in \cite{AFG12}.

(Note that the gauge invariance property can be restated in our language as the fact that $G$ and $(R, \alpha , e) \vartriangleleft G$ lead to the same Lindblad generator for all $R \in \mathbb{C}^{n \times n} $ unitary,  $\alpha \in \mathbb{C}^n$ and  $e \in \mathbb{R}$. Here, $(R, \alpha , e) $ may be thought of as an element of the Euclidean group over the Hilbert space $\mathfrak{K} = \mathbb{C}^n$.)

For instance, we may make a displacement by fixing a self-adjoint operator $%
F $, and performing the unitary rotations (here $  \phi$ is a real
parameter) to get 
\begin{eqnarray*}
 L (\phi )=e^{iF  \phi }Le^{ -iF  \phi },\quad 
 H (\phi )=e^{iF  \phi }He^{-iF \phi }.
\end{eqnarray*}
We consider a small parameter  $\delta \phi$ about $\phi =0$, then to lowest order we have changes $\delta H \equiv -i[H,F] \, \delta \phi$ and  $\delta L \equiv -i[L,F] \, \delta \phi$.
We therefore have
\begin{eqnarray}
G^\prime &\sim &  \bigg( I , L -i[L,F] \delta \phi +o (\delta \phi )  , \nonumber \\
&& \qquad H  -i[H,F]  \, \delta \phi +\text{Im}\left\{ \left( -i\left[
L,F\right] \right) ^\ast  L\right\} \delta \phi +o\left( \delta \phi \right) \bigg) .
\end{eqnarray}

Remarkably, the virtual work done is to first
order in $\delta \phi$ the Lindbladian of $F$: 
\begin{eqnarray}
\Delta H \equiv H - \mathscr{L}\left( F\right) \delta \phi +o\left( \delta \phi \right)
.
\label{eq:virtual_work_Lindbladian}
\end{eqnarray}
In particular, perturbations induced by observables $F$ do \textit{zero virtual work} if $F$ is a constant of the motion, see \cite{AFG12}.

Corollary \ref{cor:main} then states that the model generated by $G $ and its virtual perturbations agree in the limit where the perturbations
converge to the identity.

\subsection{Large $n$ limit of maximal squeezing}

We consider the use of large squeezing of a quantum input noise to replace the coupling with an effective term that picks out only
one quadrature of the noise, originally suggested at the level of the filter in \cite[Chapter 4]{Bouten_PhD}.

\bigskip

We recall the general set up of a squeezed quantum Wiener process, $B\left(
t\right) $, which satisfies an It\={o} table of the form 
\begin{eqnarray*}
\begin{tabular}{l|ll}
$\times $ & $dB^\ast  $ & $dB$ \\ \hline
$dB$ & $\left( n+1\right) dt$ & $mdt$ \\ 
$dB^\ast  $ & $m^\ast  dt$ & $ndt$%
\end{tabular}
\end{eqnarray*}
where $n\geq 0$ and $\left| m\right| ^{2}\leq \sqrt{n\left( n+1\right) }$.
(Note $n$ is to be distinguished from the previous notation for the multiplicity which is simply $\text{dim} \,\mathfrak{K} =1$ in this case.)
We will be interested in the maximally squeezed situation were 
\begin{eqnarray*}
m\equiv \sqrt{n\left( n+1\right) }e^{i\theta }.
\end{eqnarray*}
To obtain a unitary stochastic evolution on a system with Hilbert space $\mathfrak{h}$ driven by the squeezed noise, we consider a QSDE of the form 
\begin{eqnarray}
dU^{(n)}\left( t\right) =dG^{(n)}\left( t\right) \,U^{(n)}\left( t\right)
,\qquad U^{(n)}\left( 0\right) =I,
\label{eq:squeeze_B}
\end{eqnarray}
where $dG^{(n)} (t) = LdB(t)^\ast   + RdB(t) + K^{(n)} dt$ and we must have
the isometry condition $dG^{(n)}(t)+dG^{(n)   }(t)^\ast+dG^{(n)}\left( t\right)
^\ast  dG^{(n)}\left( t\right) =0$.

From the isometry condition, we must have $R= -L^\ast  $, and furthermore we find $%
K^{(n)}dt + K^{(n) \ast   }dt +\left[ LdB\left( t\right) ^\ast  -L^{\ast  
}dB\left( t\right) \right] ^\ast  \left[ LdB\left( t\right) ^{\ast  
}-L^\ast  dB\left( t\right) \right] =0$, which requires $K^{(n)} $ to be  
\begin{eqnarray*}
K^{(n)} =-\frac{1}{2}\left( n+1\right) L^\ast  L-\frac{1}{2}nL L^\ast   +%
\frac{1}{2}mL^{\ast   2}+\frac{1}{2}m^\ast  L^{2}-iH,
\end{eqnarray*}
for some $H$ self-adjoint on $\mathfrak{h}$.

It is possible for maximally squeezed noise to be represented in terms of
a standard Hudson-Parthasarathy Fock space process \cite{HP}, $A\left( t\right) $, by means of a Bogoliubov
transformation: 
\begin{eqnarray*}
B\left( t\right) =u\,A\left( t\right) +v\,A\left( t\right) ^\ast  .
\end{eqnarray*}

We follow \cite{HHKKR02}, and take the Bogoliubov transformation to be the one determined by
\begin{eqnarray*}
u^{(n)} &=& \frac{1}{\sqrt{\nu \left( n\right) }} \bigg[ n+1+\sqrt{n\left( n+1\right) }e^{i\theta } \bigg] ,\\
v^{(n)} &=& \frac{1}{\sqrt{\nu \left( n\right) }} \bigg[ n+\sqrt{n\left( n+1\right) }e^{i\theta } \bigg] ,
\end{eqnarray*}
where 
\begin{eqnarray*}
\nu \left( n\right) =2n+1+2\sqrt{n\left( n+1\right) }\cos \theta .
\end{eqnarray*}
Note that the QSDE (\ref{eq:squeeze_B}) may now be considered as being driven by the vacuum noise
process $A\left( t\right) $ with SLH terms $G \sim (I , L^{(n)} , H )$ where 
\begin{eqnarray*}
L^{\left( n\right) } &=&L\, u^{\left( n\right) \ast   }-L ^\ast   \,
v^{\left( n\right) } \\
&=&\frac{1}{\sqrt{\nu \left( n\right) }}\left[ \left( n+1+\sqrt{n\left(
n+1\right) }e^{-i\theta }\right) L -\left( n+\sqrt{n\left( n+1\right) }%
e^{i\theta }\right) L ^\ast  \right] .
\end{eqnarray*}

What is of interest here is the fact that $L^{\left( n\right) }$ is
approximately skew-adjoint, that is 
\begin{eqnarray}
L^{\left( n\right) }=F^{\left( n\right) }+\frac{1}{\sqrt{\nu \left( n\right) 
}}L ,
\end{eqnarray}
where 
\begin{eqnarray}
F^{\left( n\right) }=\frac{1}{\sqrt{\nu \left( n\right) }}\left[ \left( n+ 
\sqrt{n\left( n+1\right) }e^{-i\theta }\right) L -\text{h.c.}\right]
=-F^{\left( n\right) \ast   }.
\end{eqnarray}

We now wish to obtain an equivalent model $\widetilde{G}^{\left( n\right) }\sim \left( I,F^{\left( n\right) },\widetilde{H}^{\left( n\right) }\right) $ where we replace $L^{(n)}$ with its approximation $F^{(n)}$. We allow that the Hamiltonian term will need to shift to some new form $\tilde{H}^{(n)}$. This is achieved by taking $\widetilde{G}^{\left( n\right) }=G^{\left( n\right)
}\vartriangleleft \delta G^{\left( n\right) }$ with
\begin{eqnarray*}
\delta G^{\left( n\right) }\sim \left( I,-\frac{1}{\sqrt{\nu \left( n\right) 
}}L ,0\right) .
\end{eqnarray*}
This subtracts the unwanted part from the coupling $L^{(n)}$ and introduces by the series product the new Hamiltonian 
$\widetilde{H}^{\left( n\right) }=H+H_{n} $ where 
\begin{eqnarray*}
H_{n} &=& \text{Im}\left\{ L^{\left( n\right) \ast   } \delta L^{(n)}  \right\} 
= \frac{1}{\sqrt{\nu \left( n\right) }} \text{Im}\left\{ L^\ast   F^{\left( n\right) }\right\} \\
&=& -\frac{1}{2i}\left[ \frac{n+\sqrt{n\left( n+1\right) }e^{-i\theta }}{\nu
\left( n\right) }L^2 -\text{h.c.}\right] -\frac{\sqrt{n(n+1)}}{\nu \left(
n\right) }\sin \theta L ^\ast   L .
\end{eqnarray*}

Here the limit models are divergent as $F^{(n)} \asymp \sqrt{n} \left[ \frac{1+e^{-i\theta }}{2(1+\cos \theta ) }L -%
\text{h.c.}\right] $,
using the fact that $\nu (n) \asymp 2 (1+\cos \theta ) n $ for large $%
n$. We also have the limit 
\begin{eqnarray*}
H^\prime \triangleq \lim_{n \to n} H_{n} &=& -\frac{1}{2i}\left[ \frac{%
1+e^{-i\theta }}{2(1+\cos \theta ) }L ^2 -\text{h.c.}\right]
 -\frac{ \sin
\theta }{2(1+\cos \theta ) } L ^\ast  L .
\end{eqnarray*}

From Theorem \ref{thm:main_divergent} we conclude that the model generated by $\widetilde{G}^{(n)}$ is asymptotically equivalent to the original one $G^{(n)}$. 

The skew-adjointness of $F^{(n)}$ now implies that the asymptotically equivalent dynamics is given by
\begin{eqnarray}
d \widetilde{U}^{(n)} (t) = \bigg\{ F^{(n)} dQ (t) -\bigg( \frac{1}{2}
F^{(n)\ast  }F^{(n)} +i H +iH_n \bigg) dt \bigg\} \widetilde{U}^{(n)} (t),
\end{eqnarray}
where, significantly, we now only encounter  the specific quadrature $Q(t) = A (t) + A(t)^\ast  
$. We therefore obtain a class of essentially commutative dilations \cite{KM87} as the equivalent asymptotic model.

In a homodyne measurement scheme we measure the output quadrature process $Y^{(n)}(t) = \widetilde{U}^{(n)} (t)^\ast   Q(t) \widetilde{U}^{(n)} (t)$. However, we now have $Y^{(n)} (t) \equiv Q(t)$ since $[ Q(t) , \widetilde{U}^{(n)} (t) ] =0$. In other words, we are measuring white noise and
not extracting any information about the system. On the other hand, the
knowledge of $Q(t)$ allows us to reconstruct the actual unitary evolution of
the system entirely.

\subsection{Faraday rotation}

We consider a model describing a cold gas of atoms driven by a $y$-polarized laser field: the atoms undergo a weak Faraday rotation scattering photons from the $y$-polarization channel into the $x$-polarization channel; the scattering can be enhanced by taking the laser field to be in a large amplitude coherent state. Previously, it has been suggested \cite{BSSM07} that, in an appropriate limit (described below), the channels decouple. We can now make this argument precise.

In detail, we have the following SLH system 
\begin{eqnarray*}
G \sim \left(
S=\left[ 
\begin{array}{cc}
\cos \left( \kappa F_{z}\right) & -\sin \left( \kappa F_{z}\right) \\ 
\sin \left( \kappa F_{z}\right) & \cos \left( \kappa F_{z}\right)
\end{array}
\right] ,L=\left[ 
\begin{array}{c}
0 \\ 
0
\end{array}
\right] ,H=0
\right) ,
\end{eqnarray*}
describing the Faraday interaction between two inputs (the two linearly
polarized modes, $A^{x}\left( t\right) $ and $A^{y}\left( t\right) $, of a
light field) with a cloud of cold atoms as described in \cite{BSSM07},
Section III. Here $F_{z}$ is the $z$ -component of the collective spin
operator for the cloud of atoms and $\kappa $ is a coupling constant, which
will be taken to be small.

We add a large constant displacement $\alpha $ to the $y$-polarization
field, $A^{y}\left( t\right) $ beforehand, so that the model becomes 
\begin{eqnarray*}
\left( S,0,0\right) \vartriangleleft \left( I_{2},\left[ 
\begin{array}{c}
0 \\ 
\alpha
\end{array}
\right] ,0\right) =\left( S,\left[ 
\begin{array}{c}
-\sin \left( \kappa F_{z}\right) \alpha \\ 
\cos \left( \kappa F_{z}\right) \alpha
\end{array}
\right] ,0\right) .
\end{eqnarray*}
We take the following scaling 
\begin{eqnarray*}
\kappa \hookrightarrow \frac{1}{k}\kappa ,\qquad \alpha \hookrightarrow
k\alpha ,
\end{eqnarray*}
so that the model becomes $G^{\left( k\right) }\sim \left( S^{(k)}, L^{(k)} , 0 \right)$ with
\begin{eqnarray*}
 S^{(k)} = \left[ 
\begin{array}{cc}
\cos \left( \frac{\kappa }{k}F_{z}\right) & -\sin \left( \frac{\kappa }{k}%
F_{z}\right) \\ 
\sin \left( \frac{\kappa }{k}F_{z}\right) & \cos \left( \frac{\kappa }{k}%
F_{z}\right)
\end{array}
\right] , \quad  L^{(k)} =\left[ 
\begin{array}{c}
-k\sin \left( \frac{\kappa }{k}F_{z}\right) \alpha \\ 
k\cos \left( \frac{\kappa }{k}F_{z}\right) \alpha
\end{array}
\right]   ,
\end{eqnarray*}
which is divergent as $k\rightarrow \infty $. We note that the scattering matrix $S^{(k)}$ converges to the identity, but the coupling terms are
\begin{eqnarray*}
L^{(k)} = 
\left[ 
\begin{array}{c}
-k\sin \left( \frac{\kappa }{k}F_{z}\right) \alpha \\ 
k\cos \left( \frac{\kappa }{k}F_{z}\right) \alpha
\end{array}
\right]
\asymp 
\left[ 
\begin{array}{c}
- \kappa F_{z}  \alpha \\ 
k  \alpha
\end{array}
\right] .
\end{eqnarray*}

We now consider the following perturbation 
\begin{eqnarray*}
\delta G^{\left( k\right) }\sim \left( S^{(k)\ast   }, S^{(k)\ast   }
\left[ 
\begin{array}{c}
k\sin \left( \frac{\kappa }{k}F_{z}\right) \alpha -\kappa  F_{z} \alpha \\ 
k \left( 1-\cos \left( \frac{\kappa }{k}F_{z}\right) \right) \alpha 
\end{array}
\right] ,0\right) .
\end{eqnarray*}

Note that $\delta G^{(k)}$ vanishes as $k \to \infty$ so Theorem \ref{thm:main_divergent} applies with the equivalent sequence of models 

\begin{eqnarray*}
\tilde{G}^{k}=G^{\left( k\right) }\vartriangleleft \delta G^{\left( k\right)
}\sim \left( \left[ 
\begin{array}{cc}
1 & 0 \\ 
0 & 1
\end{array}
\right] ,\left[ 
\begin{array}{c}
-\kappa F_{z} \alpha \\ 
k  \alpha
\end{array}
\right] ,0\right) .
\end{eqnarray*}
This justifies the use of the filter, equation (24) in \cite{BSSM07}, based
on polarimetry or homodyne detection of the $x$-polarization output quadrature.

\subsection{Local asymptotic normality}

Following \cite{CBG} we consider the following parameterized family of
systems $G\left( \theta \right) \sim \left( I,L\left( \theta \right)
,H\left( \theta \right) \right) $, with $\theta $ in some open subset of $ \mathbb{R}$. 
The idea is that the family may exhibit the property of local asymptotic normality about a parameter value $\theta_0$. Specifically, we study the large sample size $n$ limit in quantum statistics with an anticipated scaling $\theta = \theta_0 + \frac{v}{\sqrt{n}}$ for some factor $v$.

Using Taylor series expansions, we assume that we may write 
\begin{eqnarray*}
L\left( \theta_0 +\Delta \theta \right) &=&L\left( \theta _{0}\right)
+L^{\prime }\left( \theta _{0}\right) \Delta \theta +\frac{1}{2}L^{\prime
\prime }\left( \theta _{0}\right) \left( \Delta \theta \right)
^{2}+R_{L}\left( \Delta \theta \right) , \\
H\left( \theta_0 +\Delta \theta \right) &=&H\left( \theta _{0}\right)
+H^{\prime }\left( \theta _{0}\right) \Delta \theta +\frac{1}{2}H^{\prime
\prime }\left( \theta _{0}\right) \left( \Delta \theta \right)
^{2}+R_{H}\left( \Delta \theta \right),
\end{eqnarray*}
where the remainder terms satisfy $R_{L}\left( \Delta \theta \right)
=o\left( \left( \Delta \theta \right) ^{2}\right) ,R_{H}\left( \Delta \theta
\right) =o\left( \left( \Delta \theta \right) ^{2}\right) $.

In the following, we relabel $\sqrt{n}$ as $k$, and consider the following scaling 
\[
\Delta \theta =\frac{v}{k},\qquad t\hookrightarrow k^{2}t.
\]
This leads to 
\begin{eqnarray*}
G^{\left( k\right) }_v &=&\left( I,kL\left( \theta _{0}+\frac{v}{k}\right)
,k^{2}H\left( \theta _{0}+\frac{v}{k}\right) \right) \\
&=& \Bigg( I,kL\left( \theta _{0}\right) +L^{\prime }\left( \theta
_{0}\right) v+\frac{1}{2}L^{\prime \prime }\left( \theta _{0}\right) \frac{%
v^{2}}{k}+kR_{L}\left( \frac{v}{k}\right) , \\
&&k^{2}H\left( \theta _{0}\right) +kH^{\prime }\left( \theta _{0}\right) v+%
\frac{1}{2}H^{\prime \prime }\left( \theta _{0}\right)
v^{2}+k^{2}R_{L}\left( \frac{v}{k}\right) \Bigg) .
\end{eqnarray*}
We note that 
\[
\lim_{k\rightarrow \infty }k^{2}R_{L}\left( \frac{v}{k}\right) =0,\quad
\lim_{k\rightarrow \infty }k^{2}R_{H}\left( \frac{v}{k}\right) =0.
\]

We now wish to retain the following dominant terms to give the following
equivalent asymptotic sequence of models 
\begin{eqnarray*}
\widetilde{G}_{v}^{\left( k\right) } &=&\Bigg(I,kL\left( \theta _{0}\right)
+L^{\prime }\left( \theta _{0}\right) v, \\
&&k^{2}H\left( \theta _{0}\right) +kH^{\prime }\left( \theta _{0}\right) v+%
\frac{1}{2}H^{\prime \prime }\left( \theta _{0}\right) v^{2}+\tilde{H}%
^{\left( k\right) }\Bigg),
\end{eqnarray*}
where $\tilde{H}^{\left( k\right) }$ is an additional Hamiltonian correction
to be determined. We seek $\delta G^{\left( k\right) }$ such that $%
\widetilde{G}_{v}^{\left( k\right) }=G_{v}^{\left( k\right)
}\vartriangleleft \delta G_{v}^{\left( k\right) }$ and using the series product
inversion we have
\begin{eqnarray*}
\delta G_{v}^{\left( k\right) } &=& \left( G_{v}^{\left( k\right) }\right) ^{-1}\vartriangleleft 
\widetilde{G}_{v}^{\left( k\right)}  \\
&\sim &\Bigg(I,-\frac{1}{2}L^{\prime \prime }\left( \theta _{0}\right) \frac{%
v^{2}}{k}-kR_{L}\left( \frac{v}{k}\right) , \, +\tilde{H}^{\left( k\right) }-k^{2}R_{H}\left( \frac{v}{k}\right)  \\
&&-\text{Im}\left[ \left( \frac{1}{2}L^{\prime \prime }\left( \theta
_{0}\right) \frac{v^{2}}{k}+kR_{L}\left( \frac{v}{k}\right) \right) ^{\ast }%
\bigg(kL(\theta _{0})+vL^{\prime }\left( \theta _{0}\right) \bigg)\right] %
\Bigg),
\end{eqnarray*}
and for $\delta G^{\left( k\right) }$ to converge to the group identity $%
E\sim \left( I,0,0\right) $ we need $\tilde{H}^{\left( k\right) }-\frac{v^{2}%
}{2}$Im$\left[ L^{\prime \prime }\left( \theta _{0}\right) ^{\ast }L(\theta
_{0})\right] $ to converge to zero. 

Therefore, from Theorem \ref{thm:main_divergent}, we obtain the equivalent asymptotic model
\begin{eqnarray*}
\widetilde{G}_{v}^{\left( k\right) } &=&\Bigg(I,kL\left( \theta _{0}\right)
+L^{\prime }\left( \theta _{0}\right) v, \\
&&k^{2}H\left( \theta _{0}\right) +kH^{\prime }\left( \theta _{0}\right) v+%
\frac{1}{2}H^{\prime \prime }\left( \theta _{0}\right) v^{2}+\frac{v^{2}}{2}%
\text{Im}\left[ L^{\prime \prime }\left( \theta _{0}\right) ^{\ast }L(\theta
_{0})\right] \Bigg) .
\end{eqnarray*}

The expression, $\frac{v^{2}}{2}H^{\prime \prime }\left( \theta _{0}\right) +\frac{v^{2}}{2}%
\text{Im}\left[ L^{\prime \prime }\left( \theta _{0}\right) ^{\ast }L(\theta
_{0})\right]$, that now appears in the equivalent asymptotic model is identified with the phase term occurring in the local asymptotic normality result \cite{CBG}.

\section*{Acknowledgement}
The authors would like to thank Institut Henri Poincar\'{e} for the hospitality shown during the Trimester on Measurement and Control of Quantum Systems April-July 2018 where this work was carried out. We would like to thank Pierre Rouchon and Hendra Nurdin for stimulating discussions during this time.

\bigskip

\end{document}